\documentclass[english,a4paper]{article}
\usepackage{dcolumn}
\usepackage{bm}
\usepackage{graphicx}
\usepackage{amsmath}
\usepackage{amsfonts}
\usepackage{amssymb}
\usepackage{amsthm}
\usepackage{amstext}
\usepackage{amsbsy}
\usepackage{amsopn}
\usepackage{amscd}
\usepackage{amsxtra}

\newtheorem{teo}{Theorem}
\newtheorem{lema}{Lemma}
\newtheorem{prop}{Proposition}
\newtheorem{defi}{Definition}

\newtheorem{rmk}{Remark}

\def\openone{\leavevmode\hbox{\small1\kern-3.3pt\normalsize1}}

\usepackage[usenames,dvipsnames]{color}

\begin{document}
\title{Signatures of physical constraints in rotating rigid bodies}
\author{G. J. Gutierrez Guillen\footnote{Institut de Math\'ematiques de Bourgogne - UMR 5584, CNRS, Universit\'e de Bourgogne, F-21000 Dijon, France}, E. Aldo Arroyo\footnote{Centro de Ci\^{e}ncias Naturais e Humanas, Universidade Federal do ABC, Santo Andr\'{e}, 09210-170 S\~{a}o Paulo, SP, Brazil}, P. Marde\v si\'c\footnote{Institut de Math\'ematiques de Bourgogne - UMR 5584, CNRS, Universit\'e de Bourgogne, F-21000 Dijon, France; University of Zagreb, Faculty of Science, Department of Mathematics, Bijeni\v cka cesta 30, 10000 Zagreb, Croatia.}, D. Sugny\footnote{Laboratoire Interdisciplinaire Carnot de
Bourgogne (ICB), UMR 6303 CNRS-Universit\'e Bourgogne-Franche Comt\'e, 9 Av. A.
Savary, BP 47 870, F-21078 Dijon Cedex, France, dominique.sugny@u-bourgogne.fr}}

\maketitle

\begin{abstract}
We study signatures of physical constraints on free rotations of rigid bodies. We show analytically that the physical or non-physical nature of the moments of inertia of a system can be detected by qualitative changes both in the Montgomery Phase and in the Tennis Racket Effect.
\end{abstract}

\section{Introduction}
Physical dynamical systems governed by ordinary differential equations generally depend on  parameters describing their properties. Such parameters can be found either empirically from a fit of the theoretical evolution to experimental data, or from a purely theoretical approach on the basis of first physical principles. Specific constraints on these parameters can appear as a result of the application of fundamental physical laws. A natural question is then to find clear signatures of such constraints in order to detect in the time evolution of the system the well-defined character of the model system under study. We propose in this paper to study this general problem in the case of free rotational dynamics of a rigid body~\cite{arnoldBook,goldsteinBook,reillyBook,landauBook,cushmanBook}. Its dynamical evolution is described by Euler equations which depend on three moments of inertia which characterize the system mass distribution. We consider the generic situation of an asymmetric rigid body for which the three moments are different from each other. Using the principles of mechanics, it can be shown as described below that the sum of two  moments must be larger than the third moment. The equality can be achieved in the limit of a plane object. Nevertheless, Euler equations make mathematical sense even if the conditions on the moments of inertia are not verified. They are also interesting physically because they describe the dynamics of other systems such as spin 1/2 particles subjected to external electromagnetic fields for which such constraints are not relevant~\cite{cat,vandamme2017b}. Given the rotational dynamics of a rigid body, we therefore seek to detect whether this constraint is satisfied or not by the system. In order to find clear and robust signatures, we analyze the behavior of two geometric properties, namely the Montgomery Phase (MP)~\cite{montgomery1991,natario2010,levi1999,cabrera2007} and the Tennis Racket Effect (TRE)~\cite{chicone1991,vandamme2017,mardesic2020}.

The two geometric effects can be viewed as the two sides of the same coin. The MP is a geometric phase which can be interpreted for rotational dynamics as the analog of Berry phase for quantum systems~\cite{montgomery1991,bohmBook}. The MP measures a specific rotation of the rigid body in a space-fixed frame. When the angular momentum performs a loop in the body-fixed frame, the system rotates by some angle around the fixed direction of the angular momentum in the initial frame. The MP is this angle of rotation, denoted $\Delta\phi$, which depends on the energy of the system characterized as shown below by a parameter $c$. As the name suggests, TRE can be observed with a tennis racket, but also in the rotational dynamics of any asymmetric rigid body. TRE has been recently the subject of different studies both in the classical~\cite{mardesic2020,petrov1993} and quantum domains~\cite{vandamme2017b,ma2020,naturephys,hamraoui2018,RMPsugny,opatrny2018}. It is manifested by an almost $\pi$- flip of the angular momentum when the rigid body performs a full rotation around its unstable axis of rotation~\cite{chicone1991,vandamme2017}. The TRE is generally not perfect and the twist defect $\varepsilon$ is a function of the energy, i.e. of the parameter $c$. The physical or non-physical nature of the rigid body can be detected in the global behavior of the two functions $\Delta\phi(c)$ and $\varepsilon(c)$. In this study, a classical system is said to be physical if its dynamic is governed by the Euler equations, while satisfying the constraints on the moments of inertia. The role of the latter in the quantum domain is discussed in the conclusion of the paper~\cite{RMPsugny}. For physical systems and oscillating trajectories for which $c<0$, we show rigorously in this study that $\Delta\phi$ is a decreasing function larger than $2\pi$, while $\varepsilon$ is an injective function. These two properties are not verified if the physical constraints are not met. On the basis of these observations, different experiments can be considered to test the physical nature of the rigid body~\cite{phone2021}. Finally, we take advantage of this global study to extend the proof of Ref.~\cite{mardesic2020} on the existence of the TRE. It was shown in~\cite{mardesic2020} that a perfect TRE can be performed for a trajectory close to the separatrix between the oscillating and rotating motions of the rigid body in the limit of a perfect asymmetric body, in particular without satisfying the physical restriction on the moments of inertia. We prove in this study in which conditions an exact TRE is observed within such constraints.

The paper is organized as follows. In Sec.~\ref{sec.const}, we recall how to derive
the physical constraints on the moments of inertia. The description of the rotational dynamics of an asymmetric rigid body is summarized in Sec.~\ref{sec:descrip}. Section~\ref{sec.Montg} focuses on the dynamical signatures for the MP, while Sec.~\ref{sec:TRE} and~\ref{secphysnonphys} describe the case of the TRE. The global behavior of the TRE is investigated in Sec.~\ref{sec.parameters}. Conclusion and prospective views are given in Sec.~\ref{sec:conc}. Technical results are reported in~\ref{appa} and~\ref{ap.GIE}.

\section{Physical constraints on the moments of inertia}\label{sec.const}
A free rotation of a rigid body can be described by the position of the body-fixed frame with respect to the space-fixed frame, given respectively by the set of coordinates $(x,y,z)$ and $(X,Y,Z)$~\cite{goldsteinBook}. In the body-fixed frame, the angular momentum of the body $\bf{J}$ is connected to the angular velocity ${\bf \omega}$ through the relation ${\bf J}=I \omega$ where $I$ is a $3\times 3$ symmetric matrix, called the inertia matrix. Its eigenvalues are the inertia moments denoted $I_x$, $I_y$ and $I_z$ and corresponding to the three axes of the body-fixed frame. For any rigid body there exist the following physical restrictions on the values of the moments of inertia
\begin{equation}\label{eq.ret}
I_i+I_j \geq I_k
\end{equation}
with $\{i,j,k\}=\{x,y,z\}$ and not equal. This constraint can be established from the definition of the moments of inertia
\begin{equation*}
    I_i =\int_V\rho({\bf r})(x_j^2+x_k^2)d{\bf r},
\end{equation*}
where $\rho$ is the mass density, $V$ the volume of the body and $x_j$ the coordinates of the position vector ${\bf r}$, $(x_1,x_2,x_3)\equiv (x,y,z)$. We deduce that
\begin{equation*}
    I_i=\int\rho({\bf r})(x_j^2+x_k^2)d{\bf r}\leq \int\rho({\bf r})(2x_i^2+x_j^2+x_k^2)d{\bf r}=I_j+I_k .
\end{equation*}
In the case of an asymmetric rigid body such that
\begin{equation}\label{desi2}
 I_z < I_y < I_x,
\end{equation}
the only constraint to satisfy is
\begin{equation}\label{desi3}
I_y + I_z \geq I_x.
\end{equation}
\begin{defi}
Let $I_x$, $I_y$, $I_z$ be the moments of inertia of an asymmetric rigid body such that $I_z < I_y < I_x$. A rigid body is said to be \emph{physical} if the values of the moments of inertia fulfill Eq.~\eqref{desi3}.
\end{defi}
In \cite{mardesic2020}, two parameters $a$ and $b$ describing the asymmetry of the body were defined as
\begin{equation}\label{rela1}
a = \frac{I_y}{I_z} - 1  \; , \;\;\;\;\; b = 1 -\frac{I_y}{I_x}.
\end{equation}
Note that, by definition, $a>1$ and $0<b<1$. The physical constraint~\eqref{desi3} can be easily written by introducing a constant $\mathcal{I}=\mathcal{I}(a,b)$, which we call the \emph{geometric constant}. For further use, we also introduce a \emph{second geometric constant} $\mathcal{J}=\mathcal{J}(a,b)$. We have
\begin{equation}\label{eq:I,J}
\mathcal{I}=\frac{1-b}{\sqrt{b(a+b)}},\quad
 \mathcal{J}=  \frac{a+1}{\sqrt{a
(a+b)}} - 1.
\end{equation}


\begin{prop}\label{prop.physical}
Let $I_x$, $I_y$, $I_z$ be the moments of inertia of an asymmetric rigid body such that $I_z < I_y < I_x$. The moments of inertia $I_x$, $I_y$, $I_z$ describe a physical rigid body if and only if the geometric constant $\mathcal{I}$  satisfies the inequality $\mathcal{I}\geq 1$.
\end{prop}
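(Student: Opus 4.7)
The plan is to reduce both the physical inequality~\eqref{desi3} and the condition $\mathcal{I}\geq 1$ to the same elementary inequality between $a$ and $b$, and thereby verify that they are equivalent.

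First I would use the definitions~\eqref{rela1} to parametrize the moments of inertia in terms of $I_y$, writing $I_z = I_y/(a+1)$ and $I_x = I_y/(1-b)$. Substituting this into~\eqref{desi3} and cancelling the positive factor $I_y$ yields
\begin{equation*}
1+\frac{1}{a+1}\;\geq\;\frac{1}{1-b},
\end{equation*}
which, after clearing denominators (note $a+1>0$ and $1-b>0$), collapses to the single inequality $b(a+2)\leq 1$. This is the target characterization of physicality purely in terms of the asymmetry parameters.

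Second I would translate $\mathcal{I}\geq 1$ into the same language. Since $0<b<1$ and $a>0$, both the numerator $1-b$ and the denominator $\sqrt{b(a+b)}$ in~\eqref{eq:I,J} are strictly positive, so squaring is an equivalence: $\mathcal{I}\geq 1$ iff $(1-b)^{2}\geq b(a+b)$. Expanding gives $1-2b+b^{2}\geq ab+b^{2}$, and the $b^{2}$ cancels, leaving $1\geq b(a+2)$, exactly the inequality derived in the previous step.

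Combining the two equivalences completes the proof. There is no genuine obstacle; the only point that requires a moment's care is the legitimacy of squaring, which is guaranteed by the positivity of $1-b$ and of $b(a+b)$ under the standing hypotheses $a>0$, $0<b<1$ coming from $I_z<I_y<I_x$.
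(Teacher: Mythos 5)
Your proof is correct and follows essentially the same route as the paper's: both reduce the physical constraint $I_y+I_z\geq I_x$ to the elementary inequality $b(a+2)\leq 1$ (you normalize by $I_y$, the paper by $I_x$) and then identify this with the squared form of $\mathcal{I}\geq 1$. The explicit remark on the legitimacy of squaring is a minor point of added care, but the argument is the same.
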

\begin{proof}
From Eq.~(\ref{rela1}), we get
\begin{equation*}\label{rela2}
I_y = (1-b)I_x \;, \;\;\;\;\;\; I_z =\Big(\frac{1-b}{1+a}\Big) I_x.
\end{equation*}
Since $I_x\neq 0$, we have $I_y + I_z \geq I_x$, if and only if
\begin{equation*}\label{desi4}
(1-b) + \Big(\frac{1-b}{1+a}\Big) \geq 1 .
\end{equation*}
Using $0<1+a$, Eq.~\eqref{desi4} is verified if and only if
\begin{equation*}
(1-b)(1+a)+(1-b)\geq 1+a,
\end{equation*}
which is equivalent to $ab+b^2\leq 1-2b+b^2$ and finally to $\mathcal{I}\geq1$.
\end{proof}
The set of points of coordinates $(a,b)$ that fulfill the constraint $\mathcal{I}\geq 1$ is represented in Fig.~\ref{figur1}.
\begin{figure}[h!]
  \centering
  \includegraphics[width=0.8\linewidth]{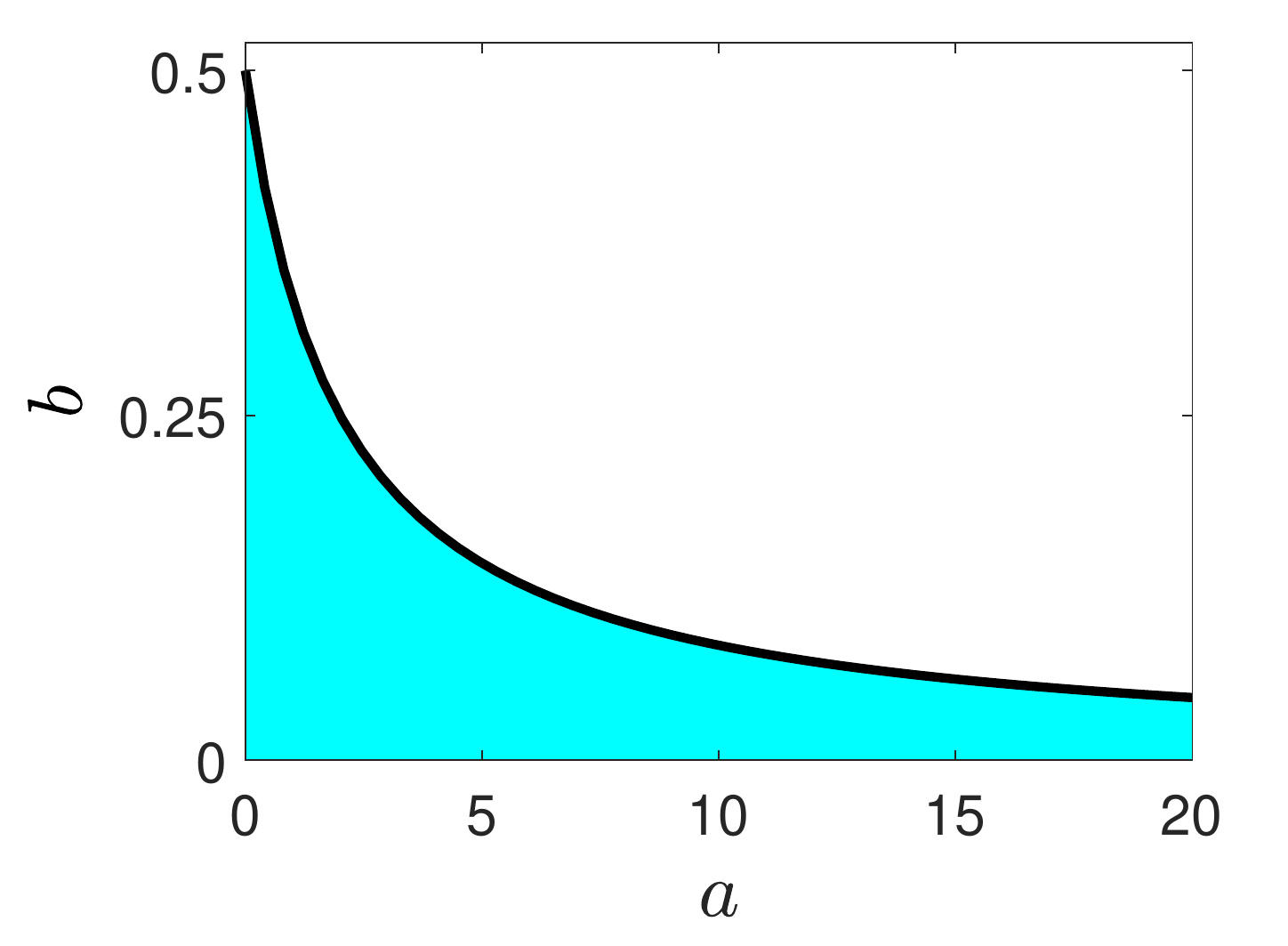}
    \caption{Set of points $(a,b)$ (blue area) such that $\mathcal{I}=\frac{1-b}{\sqrt{b(a+b)}} \geq 1$. The solid black line represents the points for which $\mathcal{I}=1$.}
  \label{figur1}
\end{figure}

\section{Description of the dynamical system}\label{sec:descrip}
We recall in this section how to describe the rotational dynamics of a rigid body~\cite{arnoldBook,goldsteinBook}. The relative motion of the body-fixed frame with respect to the space-fixed frame can be described by the three Euler angles $(\theta,\psi,\phi)$. This set of angles is not unique and we use in this work the same definition as in~\cite{mardesic2020}, which is well suited to study the MP and the TRE. The angular momentum ${\bf J}$ of the system is a constant of motion in the space-fixed frame, and assumed to be along the $Z$-axis by convention. It is not the case in the body-fixed frame even if its modulus, $J=|{\bf J}|$, does not depend on time. The two angles $(\theta,\psi)$ describe the position of ${\bf J}$ as $J_x=-J\sin\theta\cos\psi$, $J_y=J\sin\theta\sin\psi$ and $J_z=J\cos\theta$. We set below $J=1$ without loss of generality. The dynamics of the Euler angles are governed by the following differential equations~\cite{mardesic2020}
\begin{eqnarray}\label{eqeulerangle}
\dot{\theta}&=&(\frac{1}{I_y}-\frac{1}{I_x})\sin\theta\sin\psi\cos\psi \nonumber \\
\dot{\phi}&=&\frac{1}{I_y}\sin^2\psi+\frac{1}{I_x}\cos^2\psi \\
\dot{\psi}&=&(\frac{1}{I_z}-\frac{1}{I_y}\sin^2\psi-\frac{1}{I_x}\cos^2\psi)\cos\theta. \nonumber
\end{eqnarray}
This dynamical system has two constants of motion $H$ and $J$, where $H=\frac{J_x^2}{2I_x}+\frac{J_y^2}{2I_y}+\frac{J_z^2}{2I_z}$, and therefore defines a classical Hamiltonian integrable dynamic. The phase space of an asymmetric rigid body has a relatively simple structure made of a separatrix for $H=\frac{J^2}{2I_y}$ which delimits the oscillating and rotating trajectories, organized each around a stable fixed point in the angular momentum space for $H=\frac{J^2}{2I_x}$ and $H=\frac{J^2}{2I_z}$, respectively. We introduce the parameter $c=2I_yH-1$ which represents the signed distance to the separatrix and verifies $-b\leq c\leq a$. The MP and the TRE are two geometric properties which do not depend on time. It is thus interesting to introduce the relative motion of $\psi$ and $\theta$ with respect to $\phi$. This is done using Eq.~\eqref{eqeulerangle}. Replacing the terms $\sin\theta$ and $\cos\theta$ using the formula
\begin{equation}\label{eq:c}
    c=a-\sin^2\theta(a+b\cos^2\psi),
\end{equation}
one gets
\begin{eqnarray}\label{eqred}
& & \frac{d\theta}{d\phi}=\pm \frac{b\sqrt{a-c}\sin\psi\cos\psi}{(1-b\cos^2\psi)\sqrt{a+b\cos^2\psi}} \label{eqred1}\\
& & \frac{d\psi}{d\phi}=\pm \frac{\sqrt{(a+b\cos^2\psi)(c+b\cos^2\psi)}}{1-b\cos^2\psi}. \label{eqred2}
\end{eqnarray}
A shematic representation of the reduced phase space $(\psi,\frac{d\psi}{d \phi})$ is given in Fig.~\ref{fignew}. Note that the sign of $\sin\theta$ and $\cos\theta$ is not fixed by Eq.~\eqref{eq:c}, which leads to the two possible expressions of the derivatives (corresponding to the $\pm$ sign) in Eq.~\eqref{eqred1} and \eqref{eqred2}. The right choice of this sign will play a crucial role in the description of the geometric properties.

\begin{figure}[h!]
  \centering
  \includegraphics{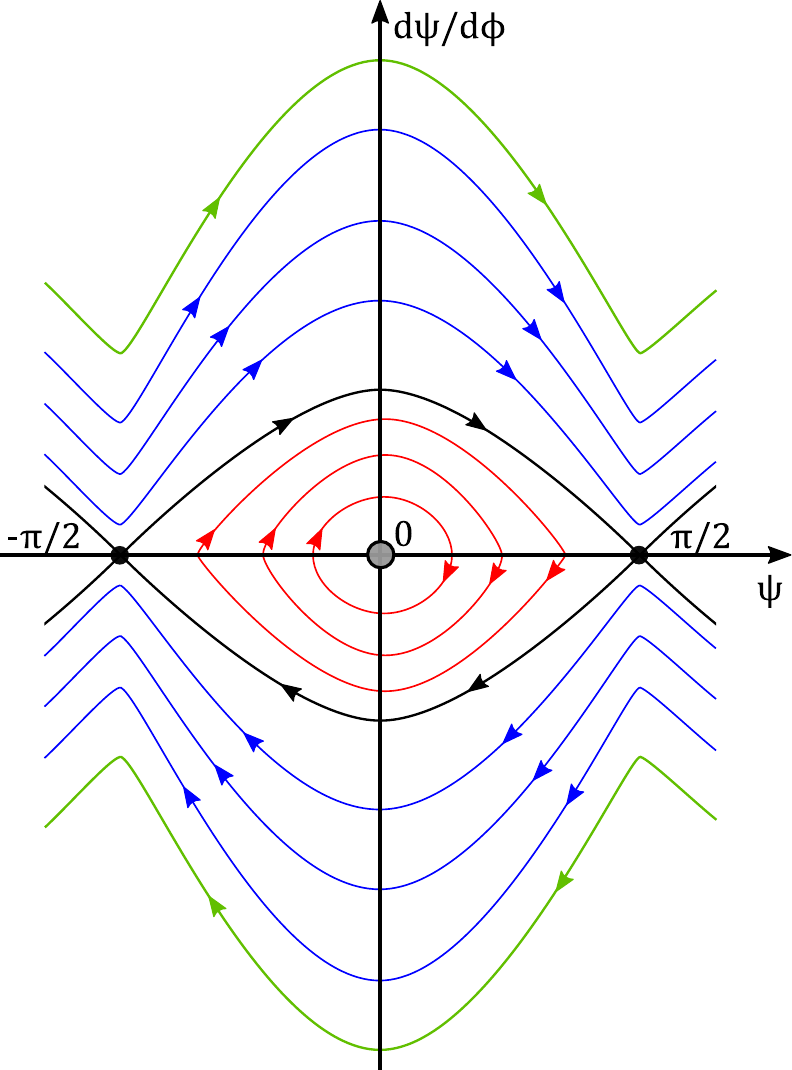}
  \caption{Schematic representation of the reduced phase space $(\psi,\frac{d\psi}{d \phi})$. The red, black and blue curves depict respectively the oscillating trajectories, the separatrix and the rotating trajectories. The grey dot and the green curves indicate respectively the position of the fixed points for the oscillating and rotating motions (\textit{i.e.} the values $c=-b$ and $c=a$ respectively). The same color code is used through the paper.}
  \label{fignew}
\end{figure}

\section{The Montgomery phase}\label{sec.Montg}
We study in this section the signature of the physical constraint introduced in Prop.~\ref{prop.physical} on the MP. Different analytical properties described by Th.~\ref{th:MP} and~\ref{th:MProt} are found for oscillating and rotating trajectories.
\begin{teo}\label{th:MP} \;
\begin{enumerate}
    \item
    For any $(a,b)\in (0,\infty)\times(0,1)$, the greatest lower bound $\inf_{c\in(-b,0)} \Delta\phi(a,b)$ of the Montgomery phase
    $\Delta\phi$, for oscillating trajectories, is given by
    $$\inf_{c\in(-b,0)} \Delta\phi(a,b)=2\pi\mathcal{I},$$
    where $\mathcal{I}=\mathcal{I}(a,b)$ is the geometric constant given by \eqref{eq:I,J}.
\item A rigid body is physical (i.e. such that $\mathcal{I}\geq1$), if and only if $$\inf_{c\in(-b,0)}\Delta \phi(a,b)  \geq 2\pi.$$
\end{enumerate}
\end{teo}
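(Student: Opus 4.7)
The plan is to reduce the Montgomery phase $\Delta\phi(c)$ for an oscillating trajectory to a definite integral on a fixed domain in which the only $c$-dependence enters through the auxiliary parameter $p := b+c \in (0,b)$, and then to read off both the infimum value and its location by inspection.

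First, I would express $\Delta\phi(c)$ as an integral over one period of $\psi$. The oscillating trajectory for $c\in(-b,0)$ has turning points at $\cos^{2}\psi_{\pm} = -c/b$. Since $\dot\phi>0$ throughout (by Eq.~\eqref{eqeulerangle}), the sign changes of $\dot\psi$ on the two half-periods combine constructively, and, using the evenness of the integrand in $\psi$,
\[
\Delta\phi(c) \;=\; 4\int_{0}^{\psi_{+}} \frac{1-b\cos^{2}\psi}{\sqrt{(a+b\cos^{2}\psi)(c+b\cos^{2}\psi)}}\,d\psi.
\]

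Next, I would apply the pendulum-type substitution $\sin\psi = \sqrt{p/b}\,\sin\alpha$ (so that the turning points become $\alpha=\pm\pi/2$), after which $c+b\cos^{2}\psi = p\cos^{2}\alpha$ and $b\sin^{2}\psi = p\sin^{2}\alpha$. The $\cos\alpha$ factors coming from $d\psi$ and from the square root cancel, and setting $v=\sin^{2}\alpha$ yields
\[
\Delta\phi(c) \;=\; 2\int_{0}^{1} \frac{(1-b)+pv}{\sqrt{v(1-v)(b-pv)(a+b-pv)}}\,dv.
\]
In this representation the endpoints no longer depend on $c$, and all four factors under the radical are strictly positive on $v\in(0,1)$, $p\in[0,b)$.

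Two observations now finish part 1. Setting $p=0$ collapses the integrand to $(1-b)/(\sqrt{b(a+b)}\,\sqrt{v(1-v)})$, whose integral over $[0,1]$ equals $\pi(1-b)/\sqrt{b(a+b)} = \pi\mathcal{I}$, giving $\lim_{c\to -b^{+}}\Delta\phi(c) = 2\pi\mathcal{I}$. Moreover, for each fixed $v\in(0,1)$ the integrand is strictly increasing in $p$: the numerator $(1-b)+pv$ increases with $p$, while both factors $b-pv$ and $a+b-pv$ in the denominator are positive and strictly decreasing. Hence $\Delta\phi(c)$ is strictly increasing on $(-b,0)$ and its infimum is the limit $2\pi\mathcal{I}$. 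Part 2 is then immediate from Proposition~\ref{prop.physical}, which identifies $\mathcal{I}\geq 1$ with physicality.

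The main obstacle I expect is finding a substitution that simultaneously puts the integration limits on a fixed interval and makes the $c$-dependence of the integrand both elementary and monotone. Without such a substitution, the turning points would still move with $c$ and the monotonicity of $\Delta\phi$ would be far less transparent; with it, both the explicit limit and the monotonicity follow by inspection.
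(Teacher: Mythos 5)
Your argument is correct and follows the same overall strategy as the paper: rewrite $\Delta\phi$ as an integral over a $c$-independent domain, establish monotonicity in the trajectory parameter, and identify the infimum with the limit at the stable fixed point $c\to -b^{+}$. The execution is streamlined, however. Your substitution $\sin\psi=\sqrt{p/b}\,\sin\alpha$ followed by $v=\sin^{2}\alpha$ is, up to the reflection $v\mapsto 1-y^{2}$, exactly the normalization $x=u+(1-u)y^{2}$ used in Lemma~\ref{lema.decreasing}; but in your coordinates the $p$-dependence of the integrand is monotone by inspection (the numerator $(1-b)+pv$ increases in $p$, while both $p$-dependent factors $b-pv$ and $a+b-pv$ of the denominator stay positive and decrease), so the paper's explicit computation of $\partial_u F_{a,b}(-bu,u)$ is not needed, and setting $p=0$ yields the limiting value $2\pi\mathcal{I}$ directly, replacing the error estimate of Lemma~\ref{lema.F}. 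The one step you leave implicit is the interchange of the limit $p\to 0^{+}$ with the integral; since you have already shown the integrand decreases pointwise to an integrable limit as $p$ decreases to $0$, monotone convergence closes this immediately, so it is not a real gap. Part 2 is, as you say, immediate from Proposition~\ref{prop.physical}.
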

\begin{teo}\label{th:MProt}
    For any $(a,b)\in(0,\infty)\times(0,1)$, the greatest lower bound $\inf_{c\in(0,a)} \Delta\phi(a,b)$ of the Montgomery phase
    $\Delta\phi(a,b)$, for rotating trajectories is given by
    $$\inf_{c\in(0,a)} \Delta\phi(a,b)=2\pi\mathcal{J},$$
    where $\mathcal{J}=\mathcal{J}(a,b)$ is the second geometric constant given by \eqref{eq:I,J}.
\end{teo}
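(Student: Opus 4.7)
The plan is to proceed in direct parallel with the proof of Theorem~\ref{th:MP}, the structural difference being that for rotating trajectories the angle $\psi$ performs a full turn of $2\pi$ rather than an oscillation between two turning points. Since $c>0$ on $(0,a)$, the quantity $c+b\cos^2\psi$ stays strictly positive, so Eq.~\eqref{eqred2} can be inverted and integrated over one $\psi$-period to produce the explicit integral representation
\begin{equation*}
\Delta\phi(c) \;=\; \int_0^{2\pi} \frac{(1-b\cos^2\psi)\,d\psi}{\sqrt{(a+b\cos^2\psi)(c+b\cos^2\psi)}}, \qquad c\in(0,a).
\end{equation*}
Writing $\Delta\phi$ in this closed form is what unlocks the rest of the argument, since every subsequent step is about controlling this single integral.

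My next step is to show strict monotonicity of $c \mapsto \Delta\phi(c)$ by differentiating under the integral sign. This is legitimate because the integrand is $C^\infty$ and dominated uniformly on any compact subinterval of $(0,a)$. The resulting $\Delta\phi'(c)$ carries an overall minus sign, and its integrand is positive because $1-b\cos^2\psi\geq 1-b>0$ for $b\in(0,1)$. Hence $\Delta\phi$ is strictly decreasing and
\begin{equation*}
\inf_{c\in(0,a)}\Delta\phi(c)\;=\;\lim_{c\to a^-}\Delta\phi(c).
\end{equation*}

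To evaluate this limit, I would interchange limit and integral by dominated convergence, using as dominating function the integrand at any fixed $c_0\in(0,a)$ (valid for $c\geq c_0$). This yields
\begin{equation*}
\lim_{c\to a^-}\Delta\phi(c)\;=\;\int_0^{2\pi}\frac{1-b\cos^2\psi}{a+b\cos^2\psi}\,d\psi.
\end{equation*}
The trick to evaluating this is to split $1-b\cos^2\psi=(a+1)-(a+b\cos^2\psi)$, which reduces the computation to $-2\pi+(a+1)\int_0^{2\pi}d\psi/(a+b\cos^2\psi)$. The remaining classical trigonometric integral equals $2\pi/\sqrt{a(a+b)}$, so the limit is $2\pi\bigl[(a+1)/\sqrt{a(a+b)}-1\bigr]=2\pi\mathcal{J}$, matching~\eqref{eq:I,J}.

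The only place where I expect any subtlety is the endpoint $c=a$, which corresponds to the fixed point of the rotating motion where the trajectory degenerates; however, because monotonicity pins the infimum precisely at this endpoint and the integrand extends continuously to $c=a$ with no singularity (in contrast with $c\to 0^+$, which is the separatrix limit and where $\Delta\phi$ diverges), the dominated convergence step is routine and should not present a real obstacle.
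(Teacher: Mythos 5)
Your proof is correct and follows essentially the same route as the paper's: both establish that $c\mapsto\Delta\phi(c)$ is decreasing on $(0,a)$, so the infimum is the limit as $c\to a^-$, and both evaluate that limit via the decomposition $1-b\cos^2\psi=(a+1)-(a+b\cos^2\psi)$ and the classical integral $\int_0^{2\pi}d\psi/(a+b\cos^2\psi)=2\pi/\sqrt{a(a+b)}$. The only difference is that you spell out the monotonicity (differentiation under the integral sign) and the limit interchange (dominated convergence), which the paper leaves implicit.
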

The corresponding proofs are given below. We observe that Th.~\ref{th:MP} gives a direct way to detect the physical nature of the rigid body. The evolution of $\Delta\phi$ with respect to $c$ in the oscillating and rotating cases is presented in Fig.~\ref{figmong} for two generic pairs $(a,b)$ corresponding to a physical and a non-physical rigid body.

\begin{figure}[h!]
  \centering
  \includegraphics[width=0.6\textwidth]{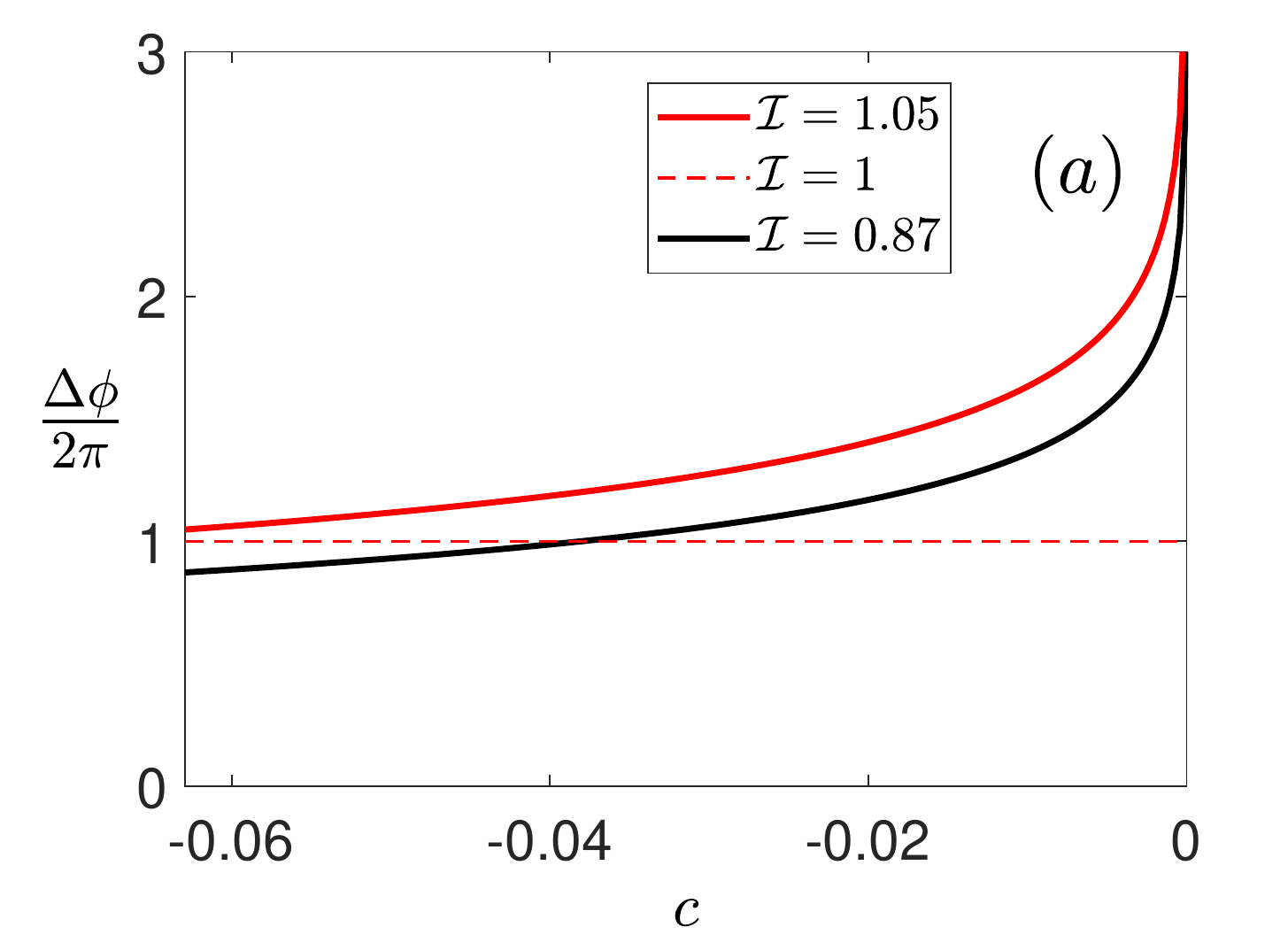}
  \includegraphics[width=0.6\textwidth]{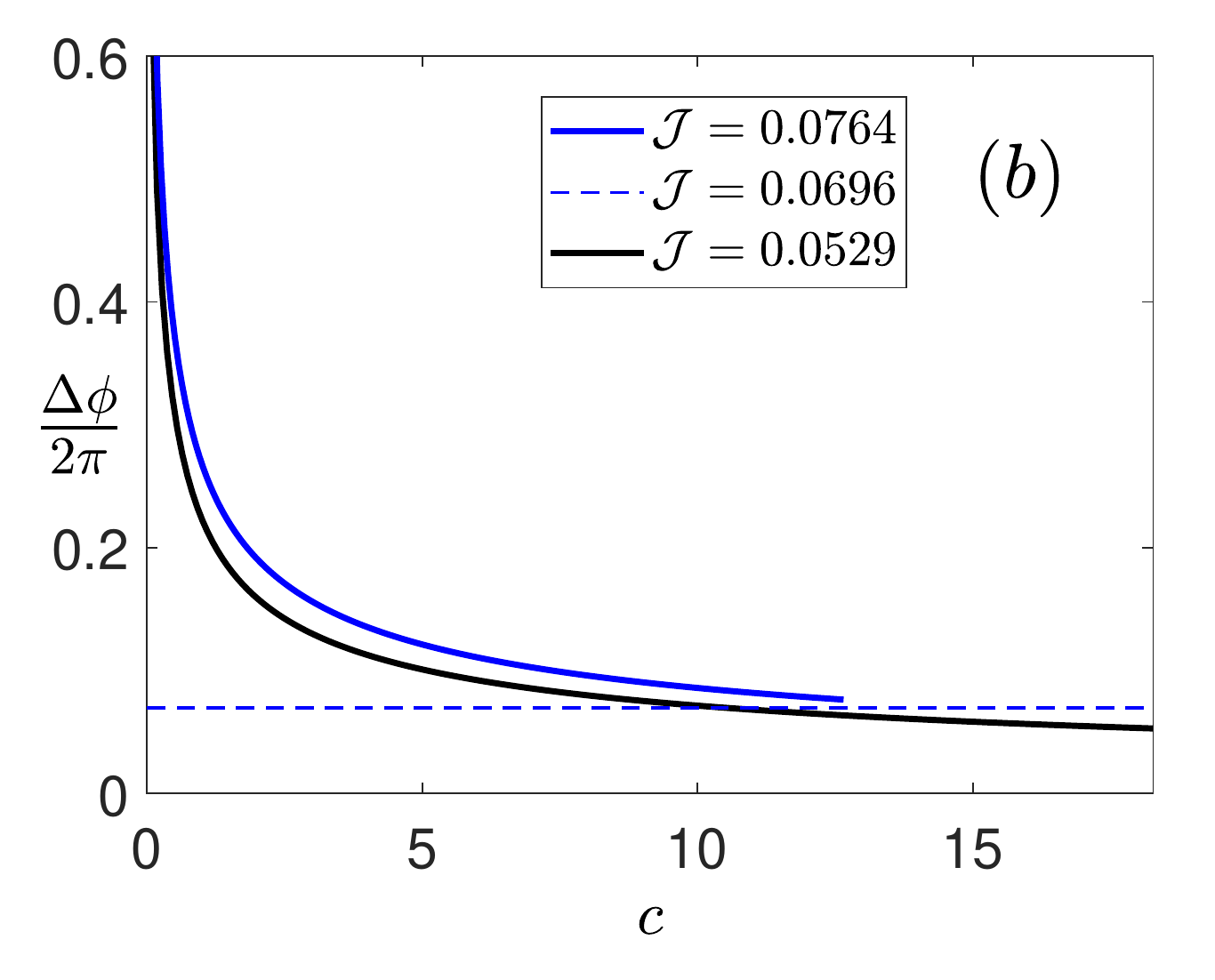}
  \caption{Plot of the Montgomery phase $\Delta\phi$ as a function of the parameter $c$ for oscillating (panel (a)) and rotating (panel (b)) trajectories. A physical and a non-physical rigid bodies are considered with respectively the parameters $(a=12.65,b=0.0629)$ and $(a=18.27,b=0.0629)$. Note that the parameter $c$ belongs to the interval $[-b,a]$. The physical and non-physical cases are depicted respectively in red or blue and in black. The horizontal dashed line represents the boundary between the physical and the non-physical bodies. The small inserts give the corresponding values of $\mathcal{I}$ and $\mathcal{J}$.}
  \label{figmong}
\end{figure}

\section*{The case of oscillating trajectories}\label{sub.oscillating}
The MP is defined as the variation of the angle $\phi$ for a loop in the reduced phase space  $(\psi,\frac{d \psi}{d \phi})$ as shown in Fig.~\ref{fignew}. For the oscillating trajectories, the condition $c+b\cos^2\psi\geq 0$ bounds the evolution of $\psi$ and leads to $\sin^2\varepsilon\geq \frac{|c|}{b}$. We denote by $\varepsilon^*=\arcsin{\sqrt{\frac{-c}{b}}}$ this minimal value. Using the symmetry of the trajectory with respect to $\frac{d \psi}{d \phi}=0$, the variation $\Delta \phi$ can be expressed as
\begin{align}\label{mongo1}
\Delta \phi = 2 \int_{-\frac{\pi
}{2}+\varepsilon^*}^{\frac{\pi
}{2}-\varepsilon^*} \frac{1-b \cos^2 \psi
}{\sqrt{(a+b\cos^2 \psi )(c+b\cos^2 \psi)}}d \psi .
\end{align}


\begin{proof}[Proof of Theorem~\ref{th:MP}]
The integral~(\ref{mongo1}) is also symmetric with respect to
the line defined by $\psi=0$, so that
\begin{align}
\label{mongo2} \Delta \phi = 4 \int_{0}^{\frac{\pi
}{2}-\arcsin(\sqrt{\frac{-c}{b}})} \frac{1-b \cos^2 \psi
}{\sqrt{(a+b\cos^2 \psi )(c+b\cos^2 \psi)}}d \psi .
\end{align}
Performing the change of variables $x=\cos^2 \psi$, we get:
\begin{align}
\label{mon3} \Delta \phi =  \frac{2}{\sqrt{ab}}\int_{-c/b}^1
\frac{1-bx}{\sqrt{x(1-x)(1+\frac{b}{a}x)(x+\frac{c}{b})}}dx .
\end{align}
Introducing $u = -c/b$, so that for $c \in (-b,0)$, $u \in (0,1)$, we get
\begin{align}
\label{mon4} \Delta \phi = 2 F_{a,b}(-b u ,u) ,
\end{align}
where  the function $F_{a,b}(-b u ,u)$ described in~\ref{appa} is the incomplete elliptic integral
\begin{align}
\label{mon5} F_{a,b}(-b u ,u) = \frac{1}{\sqrt{ab}}\int_u^1
\frac{1-bx}{\sqrt{x(1-x)(1+\frac{b}{a}x)(x-u)}}dx.
\end{align}
According to Lemma~\ref{lema.decreasing} of~\ref{appa}, giving the monotonic behavior of the function $F_{a,b}(-bu,u)$, we know that the greatest lower bound of the
function $F_{a,b}(-b u ,u)$ occurs when $u \rightarrow 1$ which is computed in Lemma~\ref{lema.F}, namely
\begin{align}
\label{mon6} \lim_{u \rightarrow 1} F_{a,b}(-b u ,u)
=\frac{1-b}{\sqrt{b(a+b)}} \pi=\pi \mathcal{I},
\end{align}
thus showing the first claim. It is then straightforward to show the second statement.
\end{proof}

\section*{The case of rotating trajectories}\label{sub.Rotating}
For rotating trajectories, the angular momentum performs a loop when the angle $\psi$ goes from $-\frac{\pi}{2}$ to $\frac{3\pi}{2}$. Using the symmetries of $\cos^2(\psi)$, we obtain that the corresponding variation of $\phi$ is given by
\begin{align}
\label{mongoro1} \Delta \phi = 4 \int_{0}^{\frac{\pi }{2}}
\frac{1-b \cos^2 \psi }{\sqrt{(a+b\cos^2 \psi )(c+b\cos^2 \psi)}}d
\psi .
\end{align}

\begin{proof}[Proof of Theorem~\ref{th:MProt}]
The proof follows the same general lines as the proof of Th.~\ref{th:MP}. One shows that the function $c\mapsto\Delta\phi(c)$ is decreasing. Hence, since $0\leq c < a$, the greatest lower bound is obtained for  $c\to a$. We obtain by direct calculations,
\begin{align}
\label{mongoro3} \lim_{c \rightarrow a}\Delta \phi &= 4
\int_{0}^{\frac{\pi }{2}} \frac{1-b \cos^2 \psi }{a+b\cos^2 \psi}d
\psi =
4 \int_{0}^{\frac{\pi }{2}} \Big[ \frac{1+a}{a+b\cos^2
\psi} - 1 \Big] d \psi \nonumber \\
&=2 \pi  \Big[ \frac{a+1}{\sqrt{a (a+b)}}
- 1 \Big]=2\pi\mathcal{J}.
\end{align}
\end{proof}


\section{The tennis racket effect}\label{sec:TRE}
In previous works~\cite{chicone1991,vandamme2017,mardesic2020}, a study of the TRE has been made in a neighborhood of the separatrix. The corresponding trajectories are assumed to exhibit the TRE since the separatix is the curve connecting the two unstable points, which are at the origin of this geometric effect~\cite{arnoldBook,chicone1991}. In short, the TRE is characterized by a variation $\Delta\phi=2\pi$ when $\Delta\psi=\pi-2\varepsilon$. A perfect TRE is described by $\varepsilon=0$, $\varepsilon$ describing the defect to a perfect TRE. The pair of values $(c,\varepsilon)$ defines a curve denoted $\mathcal{C}$.  In addition to the signatures that the physical constraint imposes on the TRE, we take advantage of this analysis to make a global study of the TRE. A local, versus global, study of the TRE corresponds therefore to a local (close to $c=0$), or  global study of the curve $\mathcal{C}$, respectively.

We study the dynamics of the reduced system
\begin{align}
\frac{d\psi}{d\phi}&=\frac{(a+b\cos^2\psi)\cos\theta}{1-b\cos^2\psi}\label{eq.psiphi}\\
\frac{d\theta}{d\phi}&=\frac{b\sin\theta\sin\psi\cos\psi}{1-b\cos^2\psi}\label{eq.thetaphi} ,
\end{align}
with first integral $c=a-\sin^2\theta(a+b\cos^2\psi)$. A reduced phase space of this system is represented in Fig.~\ref{fig.PS}. In this reduced phase space, the evolution of the angle $\phi$ can also be analyzed since, in this system, $\phi$ plays the role of time.

\begin{figure}[h!]
  \centering
  \includegraphics{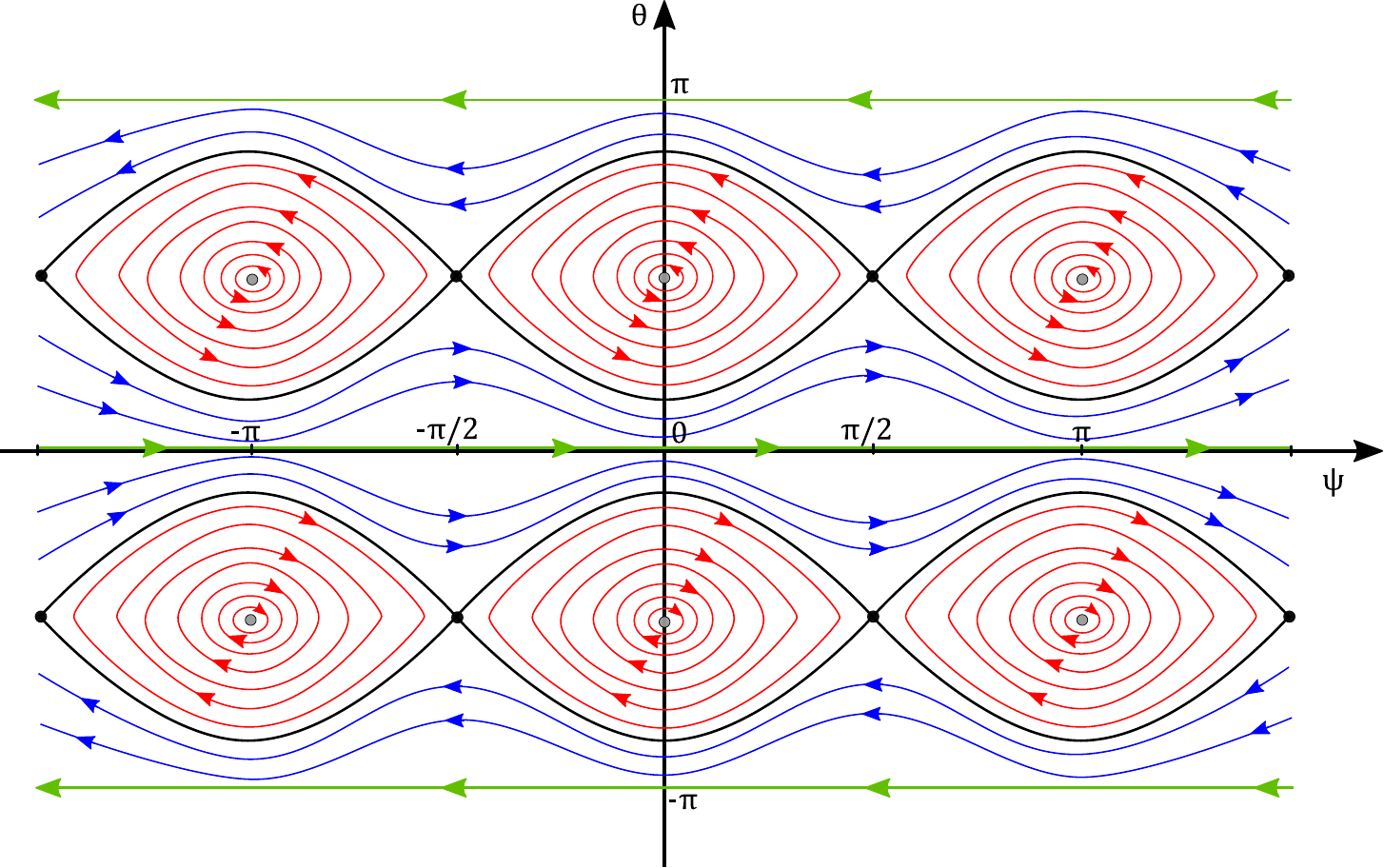}
    \caption{Schematic representation of the phase space $(\psi,\theta)$. The blue and red lines correspond respectively to the rotating and oscillating trajectories. The separatrix is plotted in black ($c=0$). The grey dots and the green curves have the values $c=-b$ and $c=a$ respectively.\label{fig.PS}}
\end{figure}
We define the curve $\mathcal{C}$ as the set of pairs $(c,\varepsilon)$ such that when considering symmetric initial and final values of $\psi$, $\psi_i=-\frac{\pi}{2}+\varepsilon$ and $\psi_f=\frac{\pi}{2}-\varepsilon$,  for the trajectory given by the value of $c$, the time taken to go from the initial to the final point along the trajectory is $2\pi$ (\textit{i.e.} $\Delta\phi=2\pi$). As already mentioned, Eq.~\eqref{eq.psiphi} can be used to study this effect. A difficulty lies in the fact that the sign of the term $\cos\theta$ given by
\begin{equation}\label{eq.cos}
\cos\theta=\pm\sqrt{\frac{c+b\cos^2\psi}{a+b\cos^2\psi}} ,
\end{equation}
and then of the derivative $d\psi/d\phi$ depend on the region of the phase space we consider. To detect this change of sign, we introduce the curve
\begin{equation*}
    \mathcal{S}=\{(c,\varepsilon)\mid c= -b\sin^2(\varepsilon)\}
\end{equation*}
for which $\cos\theta$ is equal to 0 and $\varepsilon$ is determined by the initial condition $\psi=-\pi/2+\varepsilon$. A change of sign corresponds therefore to an intersection point between the curves $\mathcal{C}$ and $\mathcal{S}$.

In this section, we prove that the curve $\mathcal{C}$ is given as a graph of a function $\varepsilon=\varepsilon(c)$. We derive implicit equations describing this function and we study different properties of the curve leading to a global description of TRE. The main result about the signature of the physical constraint on the TRE can be stated as follows.

\begin{teo}\label{teo.Rinj}
The function $\varepsilon(c)$ is injective, if and only if the geometric constant $\mathcal{I}$ verifies $\mathcal{I}\geq1$,
that is if and only if the rigid body is physical.
\end{teo}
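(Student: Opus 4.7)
My plan is to reduce the injectivity of $\varepsilon(c)$ to strict monotonicity of this implicit function on its full domain, and to tie the appearance of an extra, non-monotonic piece of $\mathcal{C}$ to the condition $\mathcal{I}<1$. The key inputs are the implicit equations derived in this section, the implicit function theorem, and Theorem~\ref{th:MP}.

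First, I would treat the \emph{direct-passage} branch of $\mathcal{C}$, which sits above $\mathcal{S}$ and is defined implicitly by
\begin{equation*}
\int_{-\pi/2+\varepsilon}^{\pi/2-\varepsilon} \frac{1-b\cos^2\psi}{\sqrt{(a+b\cos^2\psi)(c+b\cos^2\psi)}}\, d\psi = 2\pi.
\end{equation*}
Using evenness of the integrand in $\psi$ to combine the two endpoint contributions of the $\varepsilon$-derivative, both partial derivatives of the left-hand side are strictly negative in the interior, so $d\varepsilon/dc<0$: this branch is a strictly decreasing, hence injective, function of $c$. Its intersection with $\mathcal{S}$ is characterized by the collapse $\psi_f=\psi_0$, which reduces the integral to $\mathrm{MP}(c)/2$; the TRE condition then forces $\mathrm{MP}(c_\star)=4\pi$, determining a unique $c_\star\in(-b,0)$ by monotonicity of $\mathrm{MP}$ (Lemma~\ref{lema.decreasing}). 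The direct branch terminates there.

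The decisive step is the description of $\mathcal{C}$ for $c<c_\star$. Below $\mathcal{S}$ the direct integrand becomes imaginary on a subinterval, and any TRE trajectory must traverse one or more full oscillations before reaching $\psi_f$; the trajectory with $n\geq 1$ extra oscillations satisfies an implicit equation of the form
\begin{equation*}
\int_{-\pi/2+\varepsilon}^{\pi/2-\varepsilon} \frac{1-b\cos^2\psi}{\sqrt{(a+b\cos^2\psi)(c+b\cos^2\psi)}}\, d\psi = 2\pi - n\,\mathrm{MP}(c),
\end{equation*}
which, since the left-hand side lies in $[0,\mathrm{MP}(c)/2]$, requires $\mathrm{MP}(c)\in[4\pi/(2n+1),\,2\pi/n]$. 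In the physical regime $\mathcal{I}\geq 1$, Theorem~\ref{th:MP} yields $\mathrm{MP}(c)\geq 2\pi\mathcal{I}\geq 2\pi$ everywhere on $(-b,0)$, so no $n\geq 1$ branch opens; $\mathcal{C}$ reduces to the monotonic direct branch (with its boundary point on $\mathcal{S}$), and $\varepsilon(c)$ is injective. In the non-physical regime $\mathcal{I}<1$, $\mathrm{MP}$ dips into $[2\pi\mathcal{I},2\pi]$ on a neighborhood of $c=-b$, the $n=1$ branch opens on a nondegenerate interval $c<c_\star$, and by continuity it attaches to the direct branch at the common point of $\mathcal{S}$. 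The intermediate value theorem then forces the $n=1$ piece to sweep an interval of $\varepsilon$-values overlapping the image of the direct branch, so $\varepsilon(c)$ cannot be injective.

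The main obstacle I anticipate is the rigorous bookkeeping of the sign conventions of $\cos\theta$ that produces the $n\geq 1$ implicit equations, and the quantitative verification that the $n=1$ piece indeed produces $\varepsilon$-values already attained on the direct branch. Both steps reduce to comparing the monotonic ranges of the two branches using Theorem~\ref{th:MP} together with the limiting behavior of $\mathrm{MP}(c)$ at the endpoints $c\to -b$ (where $\mathrm{MP}\to 2\pi\mathcal{I}$) and $c\to 0^-$ (where $\mathrm{MP}\to\infty$).
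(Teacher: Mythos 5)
Your overall strategy --- reduce non-injectivity of $\varepsilon(c)$ to the curve $\mathcal{C}$ reaching $\varepsilon=\pi/2$, and detect this through the condition $\inf_c\Delta\phi(c)=2\pi\mathcal{I}<2\pi$ from Theorem~\ref{th:MP} --- is essentially the paper's (Lemma~\ref{lema.c} combined with Proposition~\ref{prop.inj}), and your treatment of the direct branch and of its termination on $\mathcal{S}$ (where $\Delta\phi(c_\star)=4\pi$) is correct. But there is a genuine gap in your description of $\mathcal{C}$ beyond that point. The continuation past the first intersection with $\mathcal{S}$ is \emph{not} a trajectory performing $n$ full extra oscillations followed by a direct passage; it is a trajectory whose initial $\cos\theta$ has changed sign, so that it first makes a partial \emph{backward} excursion to the turning point before sweeping across. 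The resulting implicit equation is $2F_{a,b}(c,-\tfrac{c}{b})-F_{a,b}(c,u)=2\pi$, i.e. $F_{a,b}(c,u)=\Delta\phi(c)-2\pi$ (Eq.~\eqref{eq.SIE1}), not your $F_{a,b}(c,u)=2\pi-n\,\Delta\phi(c)$; your family of equations only captures the later branches with $\varepsilon\ge\pi/2$ (Eqs.~\eqref{eq.SIE2}, \eqref{eq.odd2}, \eqref{eq.even2}). Because of this, your conclusion that for $\mathcal{I}\ge1$ ``no $n\ge1$ branch opens'' and $\mathcal{C}$ reduces to the monotonic direct branch is false whenever $1\le\mathcal{I}\le2$: by Lemma~\ref{lema.inter} the curves $\mathcal{C}$ and $\mathcal{S}$ do intersect and $\mathcal{C}$ continues via Eq.~\eqref{eq.SIE1}. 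Injectivity in the physical case therefore still requires showing that this continuation is strictly decreasing, which is the content of the first half of the paper's Proposition~\ref{prop.inj} (it uses the monotonicity of the period $T_\phi(c)=2F_{a,b}(c,-\tfrac{c}{b})$ together with a pointwise comparison of the integrands).

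In the non-physical direction your argument is also incomplete. The $\varepsilon\ge\pi/2$ branch does not ``attach to the direct branch at the common point of $\mathcal{S}$''; it attaches to the Eq.~\eqref{eq.SIE1} branch at the point $(c^*,\tfrac{\pi}{2})$ where $\Delta\phi(c^*)=2\pi$, and the $\varepsilon$-values it could revisit are those of the Eq.~\eqref{eq.SIE1} piece, not of the direct branch (on the direct branch $\varepsilon<\varepsilon(c_1)$, while for $c<c_1$ one has $\varepsilon>\varepsilon(c_1)$, since $\varepsilon$ decreases in $c$ throughout). To actually exhibit two values of $c$ with the same $\varepsilon$ you need the confinement $\varepsilon(c)\in\bigl[\arcsin\sqrt{-c/b},\,\pi-\arcsin\sqrt{-c/b}\bigr]$, which squeezes $\varepsilon(c)\to\pi/2$ as $c\to-b$: once $\varepsilon$ attains $\pi/2$ at some $c^*>-b$, values arbitrarily close to $\pi/2$ must be attained again for $c$ near $-b$. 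This squeeze is the missing ingredient; it is exactly how the paper's Proposition~\ref{prop.inj} closes the argument, and your appeal to the intermediate value theorem does not substitute for it.
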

The proof of Th.~\ref{teo.Rinj} can be found in Sec.~\ref{sub.phy}.

\section*{Description of the curve $\mathcal{C}$}
We first establish different results about the set $\mathcal{C}$.
\begin{lema}
The set $\mathcal{C}$ is a curve in the plane $(c,\varepsilon)$ given as a graph of a function
$c\mapsto \varepsilon(c)$.
\end{lema}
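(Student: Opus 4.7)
The plan is to establish the graph property of $\mathcal{C}$ by showing that, for each fixed value of $c$, the map $\varepsilon\mapsto\Delta\phi(c,\varepsilon)$ is strictly monotone. Such monotonicity immediately implies that the equation $\Delta\phi(c,\varepsilon)=2\pi$ admits at most one solution $\varepsilon$ for each $c$, which is exactly the assertion that $\mathcal{C}$ is the graph of a function $c\mapsto \varepsilon(c)$.

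First, I would express $\Delta\phi$ as a definite integral in the variable $\psi$. Starting from Eq.~\eqref{eq.psiphi} and eliminating $\cos\theta$ via Eq.~\eqref{eq.cos}, one obtains, on a branch of the trajectory along which $\cos\theta$ keeps a constant sign (that is, away from the curve $\mathcal{S}$),
\begin{equation*}
\Delta\phi(c,\varepsilon)=\int_{-\pi/2+\varepsilon}^{\pi/2-\varepsilon} \frac{1-b\cos^2\psi}{\sqrt{(a+b\cos^2\psi)(c+b\cos^2\psi)}}\,d\psi,
\end{equation*}
the orientation being chosen so that the integrand is positive. The radicand is non-negative throughout the integration interval for rotating trajectories ($c\ge 0$) automatically, and for oscillating trajectories ($c<0$) precisely when $\sin^2\varepsilon\ge -c/b$, which is the necessary condition for the trajectory labelled by $c$ to actually reach $\psi=\pm(\pi/2-\varepsilon)$.

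Next, I would apply the Leibniz rule in $\varepsilon$. Using the identity $\cos^2(-\pi/2+\varepsilon)=\cos^2(\pi/2-\varepsilon)=\sin^2\varepsilon$, the two boundary contributions coincide and yield
\begin{equation*}
\frac{\partial \Delta\phi}{\partial\varepsilon}(c,\varepsilon)= -\frac{2\,(1-b\sin^2\varepsilon)}{\sqrt{(a+b\sin^2\varepsilon)(c+b\sin^2\varepsilon)}}.
\end{equation*}
Since $0<b<1$, the factor $1-b\sin^2\varepsilon$ is strictly positive, so the derivative is strictly negative everywhere on the admissible domain. Consequently, $\varepsilon\mapsto\Delta\phi(c,\varepsilon)$ is strictly decreasing, and the equation $\Delta\phi(c,\varepsilon)=2\pi$ has at most one solution $\varepsilon$ for each $c$, giving the graph property.

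The principal obstacle is the correct handling of the sign of $\cos\theta$ in Eq.~\eqref{eq.cos}, since the closed-form expression above is valid only on a branch where $\cos\theta$ does not change sign along the trajectory. For rotating trajectories $\cos\theta$ is bounded away from zero and the argument applies directly; for oscillating trajectories one must restrict to $\varepsilon\geq\arcsin\sqrt{-c/b}$ and select the specific half-period along which $\psi$ evolves monotonically from $-\pi/2+\varepsilon$ to $\pi/2-\varepsilon$, which is the natural choice consistent with the definition of the TRE. Once this branch selection is fixed, the Leibniz computation applies uniformly and the strict monotonicity yields the lemma.
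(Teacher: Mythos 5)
Your monotonicity argument is correct as far as it goes, but it only covers the portion of $\mathcal{C}$ on which the single-branch integral representation is valid, i.e.\ where $\cos\theta$ keeps a constant sign along the whole arc from $\psi_i$ to $\psi_f$ (the first implicit equation, Eq.~\eqref{eq.1}). The set $\mathcal{C}$ as defined in the paper is larger: for oscillating trajectories the arc realizing the travel time $2\pi$ may reverse direction at the turning points $\psi=\pm(\pi/2-\varepsilon^*)$ before reaching $\psi_f$, and on those portions $\mathcal{C}$ is governed by different implicit equations (Eqs.~\eqref{eq.SIE}, \eqref{eq.odd}, \eqref{eq.even}), with $\varepsilon$ possibly exceeding $\pi/2$. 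Your Leibniz computation rules out two solutions lying on the same monotone branch, but says nothing about whether, for one and the same $c$, there could be one solution $\varepsilon_1$ satisfying Eq.~\eqref{eq.1} and a second solution $\varepsilon_2$ satisfying one of Eqs.~\eqref{eq.odd1}--\eqref{eq.even2}. Declaring that one ``selects the specific half-period along which $\psi$ evolves monotonically'' does not close this gap; it silently replaces the paper's definition of $\mathcal{C}$ by a smaller set, and the excluded points are precisely the ones needed later (Lemma~\ref{lema.c}, Proposition~\ref{prop.inj}, Theorem~\ref{teo.general}).

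The paper's own proof is different in kind and sidesteps the branch issue entirely: it is a purely dynamical uniqueness argument. If $\varepsilon_1\neq\varepsilon_2$ both belonged to $\mathcal{C}$ for the same $c$, then, using the reflection symmetry of the orbit and the additivity of the travel time in $\phi$, the time needed to flow between the two distinct points of the orbit with $\psi=-\pi/2+\varepsilon_1$ and $\psi=-\pi/2+\varepsilon_2$ would have to vanish, which is impossible because $\dot\phi>0$, so $\phi$ is a strictly increasing ``time'' along every trajectory. If you want to keep an integral-based proof, you would have to run your monotonicity argument separately on each implicit equation and, in addition, show that no two distinct branches can be simultaneously active for the same $c$ --- considerably more work than the paper's short time argument.
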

\begin{proof}
We need to prove that for a fixed value of $c$ there exists only one  value of $\varepsilon$. Let us assume that there exist two values of $\varepsilon$, namely $\varepsilon_1$ and $\varepsilon_2$ fulfilling the definition of the curve $\mathcal{C}$. The time taken to go from $-\frac{\pi}{2}+\varepsilon_k$ to $\frac{\pi}{2}-\varepsilon_k$ along the curve is equal to $2\pi$ for $k=1,2$. We deduce that the time needed to go from one point to the other (for example between $-\frac{\pi}{2}+\varepsilon_1$ and $-\frac{\pi}{2}+\varepsilon_2$) is 0. We conclude that $\varepsilon_1=\varepsilon_2$.
\end{proof}

\subsection*{First implicit equation}
We consider the derivative $\frac{d\psi}{d\phi}$ for $\theta\in[0,\frac{\pi}{2})$, i.e. in the case $\cos\theta>0$. Using Eqs.~\eqref{eq.psiphi} and \eqref{eq.cos}, one obtains that the TRE can be described as the solutions $(c,\varepsilon)$ of the implicit equation
\begin{equation}\label{eq.1}
2\pi=\int_{-\pi/2+\varepsilon}^{\pi/2-\varepsilon} \frac{1-b\cos^2(\psi)}{\sqrt{(b\cos^2(\psi)+a)(b\cos^2(\psi)+c)}}d\psi .
\end{equation}
The square root in the integrand of Eq.~\eqref{eq.1} is well defined if $c\geq -b\sin^2(\varepsilon)$, \textit{i.e.} before intersecting the curve $\mathcal{S}$.

For non-negative values of $\varepsilon$, one can consider the change of coordinates given by $x=\cos^2(\psi)$, which leads to
\begin{equation}\label{eq1}
2\pi = F_{a,b}(c,u)=\frac{1}{\sqrt{ab}}\int_u^1 \frac{1-bx}{\sqrt{x(1-x)(1+\frac{b}{a}x)(x+\frac{c}{b})}}dx,
\end{equation}
where $u=\sin^2\varepsilon$, $a$ and $b$ are fixed parameters and $c$ and $u$ the variables. Note that $u\geq 0$ and $\mathcal{S}=\{(c,u)\mid c= -bu\}$.

We then establish some properties of $\mathcal{C}$ in the region where the curve is described by Eq.~\eqref{eq.1}.
\begin{prop}
    The function $\varepsilon(c)$ given by the solutions of the implicit Eq.~\eqref{eq.1} is an injective decreasing function.
\end{prop}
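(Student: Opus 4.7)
The plan is to apply the implicit function theorem to $G(c,u) := F_{a,b}(c,u) - 2\pi = 0$, using $u = \sin^2 \varepsilon$ as the intermediate variable. Since $u \mapsto \arcsin\sqrt{u}$ is a strictly increasing bijection between $u \in [0,1)$ and $\varepsilon \in [0,\pi/2)$, it suffices to prove that $u(c)$ is strictly decreasing in $c$ on the admissible region $\{(c,u) : u + c/b > 0,\ 0 < u < 1\}$, i.e., strictly above the curve $\mathcal{S}$; this will then force $\varepsilon(c)$ to be strictly decreasing, and in particular injective.

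The core of the proof is to compute the two partial derivatives of $F_{a,b}(c,u)$ and check that they each have a definite nonzero sign. Since $u$ appears only as the lower limit of integration in \eqref{eq1}, the Leibniz rule gives
$$\partial_u F_{a,b}(c,u) = -\frac{1}{\sqrt{ab}}\cdot \frac{1-bu}{\sqrt{u(1-u)(1+\tfrac{b}{a}u)(u+\tfrac{c}{b})}},$$
which is strictly negative because $0<b<1$ forces $1-bu>0$ and every factor under the square root is strictly positive on the admissible region. Differentiating under the integral sign yields
$$\partial_c F_{a,b}(c,u) = -\frac{1}{2b\sqrt{ab}}\int_u^1 \frac{1-bx}{\sqrt{x(1-x)(1+\tfrac{b}{a}x)}\,(x+\tfrac{c}{b})^{3/2}}\,dx,$$
whose integrand is strictly positive on $(u,1)$ by the same considerations, so $\partial_c F_{a,b}(c,u) < 0$.

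With both partials nonzero, the implicit function theorem produces a differentiable function $u(c)$ solving \eqref{eq1}, with
$$\frac{du}{dc} = -\frac{\partial_c F_{a,b}}{\partial_u F_{a,b}} < 0.$$
Hence $\varepsilon(c) = \arcsin\sqrt{u(c)}$ is strictly decreasing, and therefore injective, on the region where \eqref{eq.1} parametrizes $\mathcal{C}$.

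The routine part is the sign check for the two partial derivatives. The main subtlety I anticipate is global rather than local: the implicit function theorem delivers strict monotonicity component by component, so one should verify that the set of $c$ on which \eqref{eq.1} defines $\varepsilon(c)$ is a single interval, rather than disjoint pieces across which the sign of $du/dc$ might a priori differ. Since $G$ is smooth and both of its partials are nonzero throughout the open admissible region, the level set $\{G=0\}$ is a single smooth curve there, and the uniform sign information propagates monotonicity all along it.
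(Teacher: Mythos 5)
Your proof is correct and follows essentially the same route as the paper's: apply the implicit function theorem to the integral equation and read off the sign of $d\varepsilon/dc$ from the signs of the two partial derivatives (a negative boundary term in the monotone variable, a negative integral in $c$); the paper simply does this in the $\psi$-form $I_{a,b}(c,\varepsilon)=\int_0^{\pi/2-\varepsilon}(\cdots)\,d\psi$ of Eq.~\eqref{eq.2} rather than in your $(c,u)$ coordinates. The one caveat is that the substitutions $x=\cos^2\psi$ and $u=\sin^2\varepsilon$ behind Eq.~\eqref{eq1} are only valid for $\varepsilon\in[0,\pi/2]$, so your argument does not reach the portion of the curve with $\varepsilon<0$ (the branch $c>c_0$ when a perfect TRE exists), whereas the paper's derivative formula~\eqref{eq.deri}, written directly in $\varepsilon$, applies there without change.
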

\begin{proof}
    Using the symmetry of the integrand, we can transform Eq.~\eqref{eq.1} into
\begin{equation}\label{eq.2}
\pi=I_{a,b}(c,\varepsilon)=\int_{0}^{\pi/2-\varepsilon} \frac{1-b\cos^2(\psi)}{\sqrt{(b\cos^2(\psi)+a)(b\cos^2(\psi)+c)}}d\psi .
\end{equation}

We analyze the set $\mathcal{C}$ near a point $(c_1,\varepsilon_1)$, for which the curve is described by Eq.~\eqref{eq.2}. The function given by $I_{a,b}(c,\varepsilon)- \pi$ is a continuously differentiable function such that $I_{a,b}(c_1,\varepsilon_1)-\pi=0$. Thus, we can use the Implicit Function Theorem to compute the derivative of the function $\varepsilon(c)$. The partial derivative of the function with respect to $\varepsilon$ is
\begin{equation*}
\frac{\partial(I_{a,b}(c,\varepsilon)- \pi)}{\partial \varepsilon}(c_1,\varepsilon_1)=-\frac{1-b\cos^2(\pi/2-\varepsilon_1)}{\sqrt{(b\cos^2(\pi/2-\varepsilon_1)+a)(b\cos^2(\pi/2-\varepsilon_1)+c_1)}}\neq 0 .
\end{equation*}
This partial derivative is not equal to 0, thus the Implicit Function Theorem gives the existence of a local solution near the point $(c_1,\varepsilon_1)$ and leads to the following expression for the derivative
\begin{equation*}
\left.\frac{d\varepsilon}{dc}=-\left({\frac{\partial I_{a,b}}{\partial c}}\right)\bigg/\left({\frac{\partial I_{a,b}}{\partial \varepsilon}}\right) \right|_{\varepsilon(c)}.
\end{equation*}
Using the definition of $I_{a,b}$, we obtain
\begin{equation}\label{eq.deri}
\frac{d\varepsilon}{dc}=-\frac{\int\limits_{0}^{\pi/2-\varepsilon(c)} \frac{1-b\cos^2(\psi)}{\sqrt{b\cos^2(\psi)+a}}\left(\frac{1}{2(b\cos^2(\psi)+c)^\frac{3}{2}}\right)d\psi}{\frac{1-b\cos^2(\pi/2-\varepsilon(c))}{\sqrt{(b\cos^2(\pi/2-\varepsilon(c))+a)(b\cos^2(\pi/2-\varepsilon(c))+c)}}} \leq 0.
\end{equation}
Since Eq.~\eqref{eq.deri} is valid near any point for which the curve $\mathcal{C}$ is described by the first implicit equation, we conclude that this part of the curve is strictly decreasing (the expression~\eqref{eq.deri} has a discrete set of zeros). The injectivity follows form this property.
\end{proof}
Equation~\eqref{eq1} can be used to study the existence of a TRE on the separatrix and a perfect TRE as shown by the following result.

\begin{prop}\label{prop.perf} \;
    \begin{enumerate}
        \item The TRE always occurs on the separatrix.
        \item A perfect TRE occurs if and only if $\mathcal{J}<2.$
    \end{enumerate}
\end{prop}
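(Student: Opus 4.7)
\textbf{Plan for Proposition \ref{prop.perf}.} The approach is to use the implicit description of $\mathcal{C}$ provided by Eq.~\eqref{eq1}, namely $F_{a,b}(c,u)=2\pi$ with $u=\sin^2\varepsilon$, and to settle the two existence statements by an intermediate value argument: for (1) I would fix $c=0$ and vary $u$, for (2) I would fix $u=0$ and vary $c$. In both cases the analysis reduces to controlling the endpoint behavior of a one-parameter family of elliptic integrals, using the monotonicity already exploited in the proofs of Th.~\ref{th:MP} and of the preceding injectivity proposition.

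\emph{Part (1).} Specializing Eq.~\eqref{eq1} to $c=0$ gives
$$F_{a,b}(0,u)=\frac{1}{\sqrt{ab}}\int_u^1\frac{1-bx}{x\sqrt{(1-x)(1+\tfrac{b}{a}x)}}\,dx,$$
which is continuous and strictly decreasing in $u\in(0,1)$ since the integrand is positive and the domain shrinks. I would then check the two boundary limits. As $u\to 1^-$, the integrand has only the integrable singularity $\sim C/\sqrt{1-x}$ at the right endpoint, hence $F_{a,b}(0,u)\to 0$. As $u\to 0^+$, the integrand develops the non-integrable pole $\sim 1/x$ at the left endpoint, so $F_{a,b}(0,u)\to +\infty$. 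By the intermediate value theorem there exists $u^\star\in(0,1)$ with $F_{a,b}(0,u^\star)=2\pi$, giving the point $(c,\varepsilon)=(0,\arcsin\sqrt{u^\star})\in\mathcal{C}$.

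\emph{Part (2).} A perfect TRE corresponds to $\varepsilon=0$, i.e.~$u=0$, so the implicit equation reads
$$F_{a,b}(c,0)=\frac{1}{\sqrt{ab}}\int_0^1\frac{1-bx}{\sqrt{x(1-x)(1+\tfrac{b}{a}x)(x+\tfrac{c}{b})}}\,dx=2\pi.$$
Performing $x=\cos^2\psi$ in Eq.~\eqref{mongoro1} one recognizes $2F_{a,b}(c,0)=\Delta\phi(c)$, the Montgomery phase of a rotating trajectory. The function $c\mapsto F_{a,b}(c,0)$ is continuous and strictly decreasing on $(0,a)$ (same sign analysis as in Eq.~\eqref{eq.deri} taken at $\varepsilon=0$). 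I would then compute its endpoint values: as $c\to 0^+$, monotone convergence applied to the positive integrand, combined with the divergence of $\int_0^1 dx/x$, yields $F_{a,b}(c,0)\to +\infty$; as $c\to a^-$, Th.~\ref{th:MProt} directly gives $F_{a,b}(c,0)\to \pi\mathcal{J}$. Hence $F_{a,b}(\cdot,0)$ maps $(0,a)$ bijectively onto $(\pi\mathcal{J},+\infty)$, so the equation $F_{a,b}(c,0)=2\pi$ admits a (unique) solution if and only if $\pi\mathcal{J}<2\pi$, i.e.~$\mathcal{J}<2$.

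\emph{Main difficulty.} The qualitative picture is clear; the only technical point is the rigorous handling of the boundary limits, in particular the divergence $F_{a,b}(c,0)\to+\infty$ as $c\to 0^+$, which I would establish by monotone convergence applied to the family $x\mapsto(1-bx)/\sqrt{x(1-x)(1+bx/a)(x+c/b)}$. The remaining limit at $c\to a^-$ is not a short computation from scratch, but is already contained in Th.~\ref{th:MProt} and can simply be invoked. Uniqueness of the perfect-TRE value of $c$, as well as its continuity in $(a,b)$, comes as a byproduct of the strict monotonicity.
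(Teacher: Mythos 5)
Your proposal is correct and follows essentially the same route as the paper: both parts are settled by specializing the implicit equation $F_{a,b}(c,u)=2\pi$ to $c=0$ (resp.\ $u=0$), checking the endpoint limits ($+\infty$ at one end, $0$ resp.\ $\pi\mathcal{J}$ at the other, the latter via the rotating-trajectory Montgomery phase computation of Th.~\ref{th:MProt}), and applying the intermediate value theorem together with monotonicity in $c$. The only difference is that you make the uniqueness and the boundary-limit justifications slightly more explicit than the paper does.
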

\begin{proof}
    \begin{enumerate}
        \item Having a TRE on the separatrix is equivalent as finding $u_0=\sin^2(\varepsilon_0)$ such that
\begin{equation}\label{eqsepara}
2\pi = F_{a,b}(0,u_0)=\frac{1}{\sqrt{ab}}\int_{u_0}^1 \frac{1-bx}{x\sqrt{(1-x)(1+\frac{b}{a}x)}}dx.
\end{equation}
Using the expression on the right-hand side of Eq.~\eqref{eqsepara}, we observe that $F_{a,b}(0,u)$ goes respectively to $+\infty$ and $0$ when $u$ goes to 0 and 1. Thus we conclude that there exists $u_0\in(0,1)$ such that $2\pi = F_{a,b}(0,u_0)$.
        \item Showing the existence of a perfect TRE ($\varepsilon=0$) amounts to show the  existence of $c_0\in(-b,a)$ such that
\begin{equation*}
2\pi = F_{a,b}(c_0,0)=\frac{1}{\sqrt{ab}}\int_{0}^1 \frac{1-bx}{\sqrt{x(1-x)(1+\frac{b}{a}x)(x+\frac{c_0}{b})}}dx.
\end{equation*}
We have that $F_{a,b}(c,0)$ goes to $\infty$, when $c$ goes to 0, and following the computations of Sec.~\ref{sub.Rotating}, we know that $2F_{a,b}(c,0)$ is a decreasing function with a greatest lower bound equal to $2\pi\mathcal{J}$. We conclude that a perfect TRE occurs if and only if $\mathcal{J}<2$.
    \end{enumerate}
\end{proof}
The values $\varepsilon_0$ and $c_0$ represent the TRE on the separatrix and a perfect TRE respectively. A schematic representation is given in Fig.~\ref{fig.Solution}. In Sec.~\ref{sec.parameters}, we study the behavior of these effects when  the parameter $a$ varies.

\subsection*{Intersection of $\mathcal{C}$ and $\mathcal{S}$}
In the previous subsection, we use the first implicit equation to obtain properties of the curve $\mathcal{C}$ and we explain that this equation is valid until the point where the curve $\mathcal{C}$ intersects $\mathcal{S}$ (i.e. when $\cos\theta$ changes sign). We now describe under which conditions these two curves intersect.

\begin{figure}[h!]
  \centering
    \includegraphics[scale=.5]{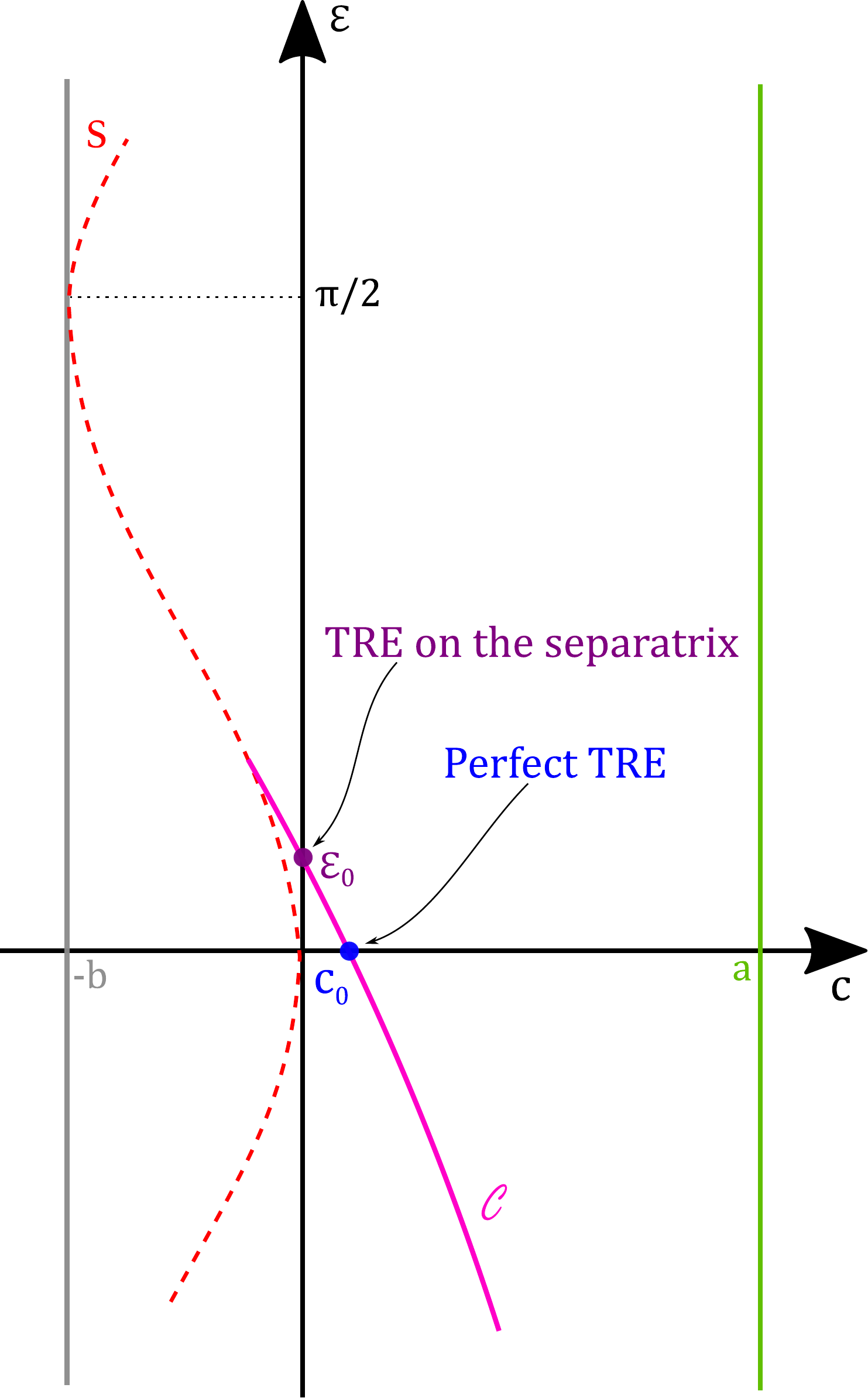}
    \caption{Schematic representation in the space $(c,\varepsilon)$ of the part of the curve $\mathcal{C}$ (solid magenta line) described by the first implicit equation. The red dashed line represents the set $\mathcal{S}$. The grey and green vertical lines have respectively the equations $c=-b$ and $c=a$. The points $\varepsilon_0$ and $c_0$ indicate the positions of a TRE on the separatrix and of a perfect TRE (see the text for details).\label{fig.Solution}}
\end{figure}

First, notice that in the region $b\sin^2(\varepsilon)+c<0$ the integrand of the integral of Eq.~\eqref{eq.1} is not well defined and the curve $\mathcal{S}$ is delimiting this region. If there exists an intersection point between $\mathcal{C}$ and $\mathcal{S}$, then $c$ has to be negative, which corresponds to a positive value of $\varepsilon$ since in this case $\mathcal{C}$ is a decreasing curve, see Fig.~\ref{fig.Solution}. Thus, we can use Eq.~\eqref{eq1}.

\begin{lema}\label{lema.inter}
The curves $\mathcal{C}$ and $\mathcal{S}$ intersect in $(0,1]$, if and only if $\mathcal{I}\leq 2$.
\end{lema}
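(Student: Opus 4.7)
The plan is to parameterize $\mathcal{S}$ by $u = \sin^{2}\varepsilon \in (0,1]$, so that each point of $\mathcal{S}$ reads $(c,\varepsilon)=(-bu,\arcsin\sqrt{u})$, and to impose the first implicit equation \eqref{eq1} at $c=-bu$. The intersection condition then becomes
\begin{equation*}
2\pi \;=\; F_{a,b}(-bu,u) \;=\; \frac{1}{\sqrt{ab}} \int_u^1 \frac{1-bx}{\sqrt{x(1-x)\bigl(1+\tfrac{b}{a}x\bigr)(x-u)}}\, dx .
\end{equation*}
The key observation is that the right-hand side is exactly the integral $F_{a,b}(-bu,u)$ appearing in \eqref{mon5}; equivalently, by \eqref{mon4}, it equals half the Montgomery phase $\Delta\phi$ of the oscillating trajectory with $c=-bu$. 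Consequently, detecting intersections of $\mathcal{C}$ with $\mathcal{S}$ in $(0,1]$ is the same as finding an oscillating trajectory whose Montgomery phase equals $4\pi$.

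Once this reduction is in place, the argument becomes a direct appeal to results already proved for the MP in Theorem~\ref{th:MP}. By Lemma~\ref{lema.decreasing}, the map $u\mapsto F_{a,b}(-bu,u)$ is strictly monotonic on $(0,1)$. Its boundary values are also controlled: as $u\to 0^{+}$ the trajectory approaches the separatrix and the period diverges, so $F_{a,b}(-bu,u)\to +\infty$, while as $u\to 1^{-}$, Lemma~\ref{lema.F} together with \eqref{mon6} gives $F_{a,b}(-bu,u)\to \pi\mathcal{I}$. Thus $u\mapsto F_{a,b}(-bu,u)$ is a continuous, strictly decreasing bijection from $(0,1)$ onto $(\pi\mathcal{I},+\infty)$, and the intermediate value theorem produces a solution $u_{\ast}\in(0,1)$ of $F_{a,b}(-bu,u)=2\pi$ if and only if $\pi\mathcal{I}<2\pi$, i.e.\ $\mathcal{I}<2$.

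The limiting case $\mathcal{I}=2$ is accommodated by interpreting the closed endpoint at $u=1$ as the limit intersection point $(c,\varepsilon)=(-b,\pi/2)$, where the infimum $\pi\mathcal{I}=2\pi$ is attained only asymptotically. The most delicate step is precisely this boundary behavior at $u\to 1^{-}$, where the two integrable singularities of the integrand (at $x=u$ and $x=1$) coalesce and the formal integral at $u=1$ collapses to zero; happily, the correct finite limit $\pi\mathcal{I}$ has already been extracted in Lemma~\ref{lema.F} by an appropriate substitution. Every other ingredient—monotonicity from Lemma~\ref{lema.decreasing} and the divergence as $u\to 0^{+}$—is already at our disposal, so the analysis of this single boundary limit is really the only obstacle to the claimed equivalence.
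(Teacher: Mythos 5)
Your proposal is correct and follows essentially the same route as the paper: both reduce the intersection condition to $F_{a,b}(-bu,u)=2\pi$ along $c=-bu$, then combine the strict monotonicity of Lemma~\ref{lema.decreasing} with the boundary limits of Lemma~\ref{lema.F} ($+\infty$ at $u\to 0^{+}$ and $\pi\mathcal{I}$ at $u\to 1^{-}$) and the intermediate value theorem. Your handling of the borderline case $\mathcal{I}=2$ via the continuous extension to $u=1$ matches the paper's remark that the intersection then occurs exactly at $u=1$.
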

\begin{proof}
If $\mathcal{C}$ and $\mathcal{S}$ intersect in $(0,1]$, then there exists $u^*\in(0,1]$ such that $F_{a,b}(-bu^*,u^*)=2\pi$. Using Lemmas~\ref{lema.decreasing} and~\ref{lema.F} of~\ref{appa}, we deduce that $u=1$ is the minimum of the function $F_{a,b}(-bu,u)$. Thus $F_{a,b}(-b,1)\leq F_{a,b}(-bu^*,u^*)$,
which implies that $\mathcal{I}\leq 2$.

Conversely, an intersection point between the two curves is characterized by the equation $F_{a,b}(c,u)=2\pi$ with $c=-bu$. The function $F_{a,b}(-bu,u)$ is a continuous function which goes to $\infty$, when $u$ goes to 0 and verifies
\begin{equation*}
F_{a,b}(-bu,u)\to \pi\mathcal{I} \leq 2\pi,
\end{equation*}
when $u$ goes to 1 (using Lemma~\ref{lema.F}). We conclude that there exists $u^*\in(0,1]$, such that $F(-bu^*,u^*)=2\pi$, and thus, an intersection point.
\end{proof}
Notice that $u\in(0,1]$ is equivalent to $\varepsilon\in(0,\frac{\pi}{2}]$, and that the intersection point is unique since $F_{a,b}(-bu,u)$ is a decreasing function (See Lemma~\ref{lema.decreasing}).

\begin{rmk}
Using the preceding proof, we deduce that the intersection point is exactly in $u=1$, if and only if $\mathcal{I}=2$.
\end{rmk}

We derived a condition on the parameters that ensures the existence of an intersection point between $\mathcal{C}$ and $\mathcal{S}$. After this intersection point the implicit equation describing the solution curve $\mathcal{C}$ changes as described below. We first recall that the first implicit equation describes the curve $\mathcal{C}$ in the region $\theta\in[0,\frac{\pi}{2})$, i.e. $\cos\theta> 0$.
Notice that, if there exists an intersection point between $\mathcal{C}$ and $\mathcal{S}$, then $c<0$, which corresponds to oscillating trajectories.
The period $T_\phi$ of these closed orbits can be calculated using the relation
\begin{equation*}
\frac{d\phi}{d\psi}=\pm\frac{1-b\cos^2(\psi)}{\sqrt{(b\cos^2(\psi)+a)(b\cos^2(\psi)+c)}}.
\end{equation*}
Indeed, since the solutions are symmetric and we consider the whole orbit to compute the period, one arrives at
\begin{equation}\label{eq.period}
T_{\phi}(c)=2\int_{-\frac{\pi}{2}+\varepsilon^*}^{\frac{\pi}{2}-\varepsilon^*}\frac{1-b\cos^2(\psi)}{\sqrt{(b\cos^2(\psi)+a)(b\cos^2(\psi)+c)}}d\psi=2F_{a,b}(-bu,u).
\end{equation}
The last equality in Eq.~\eqref{eq.period} is obtained by performing the change of coordinates that we have used before, $x=\cos^2(\psi)$ and $c=-bu=-b\sin^2(\varepsilon)$. Lemma~\ref{lema.decreasing} proves that the period $T_{\phi}(c)$ decreases when $c$ decreases. In Fig.~\ref{fig.PS}, this means that the smaller orbits have a smaller period ($\phi$ plays the role of time in this phase space). This latter result implies that the sign of $\cos(\theta)$ changes after the intersection point between $\mathcal{C}$ and $\mathcal{S}$, since for the values of $c$ between $-b$ and the value of this intersection point the periods are smaller.


The second implicit equation takes the following form
\begin{equation}\label{eq.SIE}
2\pi=\int_{-\pi/2+\varepsilon}^{-\pi/2+\varepsilon^*}-h_{a,b,c}(\psi)d\psi+\int_{-\pi/2+\varepsilon^*}^{\pi/2-\varepsilon^*}h_{a,b,c}(\psi)d\psi+\int_{\pi/2-\varepsilon^*}^{\pi/2-\varepsilon}-h_{a,b,c}(\psi)d\psi ,
\end{equation}
where $\varepsilon^*\in(0,\frac{\pi}{2})$ and
\begin{equation*}
h_{a,b,c}(\psi)=\frac{1-b\cos^2(\psi)}{\sqrt{(a+b\cos^2(\psi))(c+b\cos^2(\psi))}}.
\end{equation*}
Notice that $\varepsilon\in[\varepsilon^*,\pi-\varepsilon^*]$. Hence $\varepsilon^*<\varepsilon$.
Straightforward computation leads to the following expressions in terms of $F_{a,b}$
\begin{equation}\label{eq.SIE1}
2 F_{a,b}(c,-\frac{c}{b})-F_{a,b}(c,u)=2\pi ,
\end{equation}
for $\varepsilon<\frac{\pi}{2}$ and
\begin{equation}\label{eq.SIE2}
2 F_{a,b}(c,-\frac{c}{b})+F_{a,b}(c,u)=2\pi ,
\end{equation}
for $\varepsilon\geq\frac{\pi}{2}$.

After each intersection point between the curves $\mathcal{C}$ and $\mathcal{S}$, the implicit equation describing the curve $\mathcal{C}$ changes as it is explicitly shown in~\ref{ap.GIE}.

\section{The curve $\mathcal{C}$ in the physical and non-physical cases}\label{secphysnonphys}
In this section, we prove that the physical nature of a rigid body can be detected from the structure of the curve $\mathcal{C}$, in the sense that the properties of the curve  $\mathcal{C}$ differs between a physical and a non-physical rigid body.

\subsection{The physical case}\label{sub.phy}
The proof of Th.~\ref{teo.Rinj} is given at the end of this subsection. First, we show two results on which the proof of Th.~\ref{teo.Rinj} is based.

\begin{lema}\label{lema.c}
There exists an intersection point $(c_1,\varepsilon(c_1))$ between $\mathcal{C}$ and $\mathcal{S}$ and a point $(c^*,\frac{\pi}{2})\in\mathcal{C}$ such that $c^*\in(c_2,c_1)$, if and only if $\mathcal{I}<1$. Here $c_2$ is either the coordinate of the second intersection point, or equal to $-b$ if there is no second intersection point.
\end{lema}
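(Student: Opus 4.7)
My plan is to recast the three characteristic points of the curve $\mathcal{C}$ on the oscillating side as level sets of a single strictly increasing ``half-period'' function $G(c) := F_{a,b}(c, -c/b) = T_\phi(c)/2$. By Lemma~\ref{lema.decreasing} and Lemma~\ref{lema.F}, $G$ is a continuous bijection from $(-b, 0)$ onto $(\pi\mathcal{I}, +\infty)$. At any intersection of $\mathcal{C}$ with $\mathcal{S}$ one has $u = -c/b$: substituting into the first implicit equation~\eqref{eq1} gives $G(c_1) = 2\pi$, while substituting into the $\varepsilon \geq \pi/2$ branch~\eqref{eq.SIE2} of the second implicit equation gives $G(c_2) = 2\pi/3$. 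At $\varepsilon = \pi/2$ one has $u = 1$ and $F_{a,b}(c,1) = 0$, so both~\eqref{eq.SIE1} and~\eqref{eq.SIE2} collapse to the single condition $G(c^*) = \pi$.

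With these identifications the argument reduces to the intermediate value theorem and monotonicity. For $(\Leftarrow)$, assume $\mathcal{I} < 1$: then $\pi\mathcal{I} < \pi < 2\pi$ both lie in the image of $G$, yielding unique $c^*, c_1 \in (-b, 0)$ with $G(c^*) = \pi$ and $G(c_1) = 2\pi$, and monotonicity forces $c^* < c_1$. For the lower bound I split cases: if $\mathcal{I} < 2/3$ then the second intersection genuinely exists with $G(c_2) = 2\pi/3 < \pi = G(c^*)$, hence $c_2 < c^*$; if $2/3 \leq \mathcal{I} < 1$ then by the stated convention $c_2 := -b < c^*$. For $(\Rightarrow)$, the existence of $(c^*, \pi/2) \in \mathcal{C}$ forces $\pi$ into the image $(\pi\mathcal{I}, +\infty)$ of $G$, so $\mathcal{I} < 1$.

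The step requiring the most care, which I would handle first, is the derivation of the three level-set identifications in paragraph one. This amounts to verifying, via the change of variables $x = \cos^2\psi$, that the three-piece integral in~\eqref{eq.SIE} simplifies exactly to~\eqref{eq.SIE1}--\eqref{eq.SIE2}, and confirming that the case split $\varepsilon < \pi/2$ versus $\varepsilon \geq \pi/2$ correctly encodes the sign of $\cos\theta$ between the turning points $\psi = \pm(\pi/2 - \varepsilon^*)$. All further inputs are monotonicity and limiting values of $G$ already supplied by Lemmas~\ref{lema.decreasing} and~\ref{lema.F}.
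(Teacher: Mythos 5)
Your proposal is correct and follows essentially the same route as the paper's proof: both reduce the three characteristic points to level sets of the monotone function $F_{a,b}(-bu,u)$ along $\mathcal{S}$ (equivalently your $G(c)=T_\phi(c)/2$), use Lemmas~\ref{lema.decreasing} and~\ref{lema.F} for monotonicity and the limit $\pi\mathcal{I}$, and conclude by the intermediate value theorem. Your packaging via the single increasing bijection $G$ and the explicit case split on $c_2$ (using $G(c_2)=2\pi/3$ when $\mathcal{I}<2/3$) is a slightly cleaner presentation of the same argument, the latter point being handled in the paper only in the appendix.
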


\begin{proof}
Since the point $(c^*,\frac{\pi}{2})\in\mathcal{C}$ is found between the first and the second intersection point (in the sense that $c^*\in(c_2,c_1)$), then the implicit equation describing $\mathcal{C}$ in this point is given by Eq.~\eqref{eq.SIE}, leading for $c=c^*$ and $\varepsilon=\frac{\pi}{2}$ to the following expression
\begin{equation*}
2\int_{-\pi/2+\varepsilon^*}^{\pi/2-\varepsilon^*}\frac{1-b\cos^2(\psi)}{\sqrt{(a+b\cos^2(\psi))(c^*+b\cos^2(\psi))}}=2\pi ,
\end{equation*}
which is equivalent to
\begin{equation*}
F_{a,b}(-bu^*,u^*)=\pi,
\end{equation*}
where $u^*=-\frac{c^*}{b}$. Since the minimum value of $F_{a,b}(-bu,u)$ is $\mathcal{I}\pi$ and $c^*\neq -b$, we arrive at $\pi\mathcal{I}<\pi$ concluding that $\mathcal{I}<1$.

Conversely, the condition $\frac{1-b}{\sqrt{b(a+b)}}<1<2$ implies the existence of an intersection point on the interval $u\in(0,1)$ (due to Lemma~\ref{lema.inter}), which leads to $-b<c_1$. Let us put $u_1=-\frac{c_1}{b}$. As a first step, we prove that the condition $\mathcal{I}<1$ gives the existence of a solution $u^*\in(u_1,1)$ to the implicit equation $F_{a,b}(-bu,u)=\pi$. Using the same analysis as the one done in Lemma~\ref{lema.inter}, we observe that $F_{a,b}(-bu,u)$ is equal to $2\pi$ for $u$ equal to $u_1$, and goes to $\pi\mathcal{I}$ when $u$ goes to 1. By hypothesis, $\pi\mathcal{I}<\pi$, we conclude that there exists $u^*\in(u_1,1)$ such that $F_{a,b}(-bu^*,u^*)=\pi$.

Now, we define $c^*=-bu^*<c_1$ and prove that the point $(c,\varepsilon)=(c^*,\frac{\pi}{2})$ belongs to $\mathcal{C}$. We consider expression~\eqref{eq.SIE}, since we analyze $\mathcal{C}$ after the first intersection point.
\begin{equation*}
\int\limits_{0}^{-\pi/2+\varepsilon^*}-h_{a,b,c^*}(\psi)d\psi+\int\limits_{-\pi/2+\varepsilon^*}^{\pi/2-\varepsilon^*}h_{a,b,c^*}(\psi)d\psi+\int\limits_{\pi/2-\varepsilon^*}^{0}-h_{a,b,c^*}(\psi)d\psi=2\int\limits_{-\pi/2+\varepsilon^*}^{\pi/2-\varepsilon^*}h_{a,b,c^*}(\psi)d\psi.
\end{equation*}
Using the change of coordinates $x=\cos^2(\psi)$, we get
\begin{equation*}
2\int_{-\pi/2+\varepsilon^*}^{\pi/2-\varepsilon^*}h_{a,b,c^*}(\psi)d\psi=2F_{a,b}(c^*,\frac{-c^*}{b})=2F_{a,b}(-bu^*,u^*)=2\pi.
\end{equation*}
Thus, $(c^*,\frac{\pi}{2})$ belongs to $\mathcal{C}$. Since it is a solution of the second implicit equation, we also get $c^*\in(c_2,c_1)$, where $c_2$ is either the coordinate of the second intersection point, or $-b$, if there is no second intersection point.
\end{proof}

Following the preceding proof, the same result can be established when $c^*=-b$ and $\mathcal{I}=1$.

\begin{prop}\label{prop.inj}
The function $\varepsilon(c)$ is not injective if and only if there exists a point $(c^*,\frac{\pi}{2})\in\mathcal{C}$, with $c^*\in(-b,0)$.
\end{prop}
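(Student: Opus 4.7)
My plan is to prove the two directions by combining implicit differentiation along $\mathcal{C}$ with the admissibility bound on $\varepsilon$.

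For the direction $(\Rightarrow)$, I would argue by contrapositive. On every branch of $\mathcal{C}$ on which $\varepsilon(c) < \pi/2$---whether described by the first implicit equation~\eqref{eq.1}, by the second implicit equation~\eqref{eq.SIE1}, or by a higher-order analogue from Appendix~\ref{ap.GIE}---a computation analogous to the one in the preceding proposition, applied to whichever implicit equation governs the branch, yields $d\varepsilon/dc < 0$ once one uses $g'(c) > 0$ with $g(c)=F_{a,b}(c,-c/b)$, together with $\partial_c F_{a,b} < 0$, $\partial_u F_{a,b} < 0$ and $\sin(2\varepsilon) > 0$. Hence $\varepsilon(c)$ is strictly decreasing, and in particular injective, on the set $\{c : \varepsilon(c) < \pi/2\}$. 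Since $\varepsilon(0) = \varepsilon_0 < \pi/2$ and $\varepsilon(c)$ is continuous, non-injectivity forces the curve to attain $\pi/2$ at some interior $c^* \in (-b,0)$.

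For the direction $(\Leftarrow)$, I would assume $(c^*,\pi/2) \in \mathcal{C}$ and track the branch governed by~\eqref{eq.SIE2} for $c < c^*$. The same implicit-function-theorem computation, together with a careful one-sided limit as $u \to 1$ in which the divergence of $\partial_u F_{a,b}$ compensates the vanishing of $\sin(2\varepsilon)$, will give a finite and strictly negative value of $d\varepsilon/dc$ at $c^*$; in particular $\varepsilon(c) > \pi/2$ on a left neighbourhood of $c^*$. The decisive global ingredient is the admissibility constraint $\varepsilon(c) \in [\varepsilon^*(c),\pi - \varepsilon^*(c)]$ with $\varepsilon^*(c) = \arcsin\sqrt{-c/b}$, whose upper bound decreases monotonically to $\pi/2$ as $c \to -b^+$. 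Following the curve through its possible further intersections with $\mathcal{S}$ (handled by the generalized implicit equations of Appendix~\ref{ap.GIE}), the portion of $\mathcal{C}$ with $\varepsilon > \pi/2$ is squeezed back toward $\pi/2$, so that $\varepsilon(c)$ attains an interior local maximum strictly greater than $\pi/2$. The intermediate value theorem applied on each side of this maximum then produces two distinct values of $c$ with the same $\varepsilon$, establishing non-injectivity.

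The hard part will be the global bookkeeping in $(\Leftarrow)$: rigorously showing that after passing through $(c^*,\pi/2)$ the curve cannot stay monotonic all the way to $c = -b$. This requires combining the upper bound from $\pi - \varepsilon^*(c)$ with the explicit structure of the generalized implicit equations and with the asymptotic value $g(c) \to \pi\mathcal{I}$ as $c \to -b^+$; the hypothesis $(c^*,\pi/2) \in \mathcal{C}$, equivalent to $\mathcal{I} < 1$ by Lemma~\ref{lema.c}, is precisely what drives the curve back toward $\pi/2$.
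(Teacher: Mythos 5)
Your skeleton is the paper's: injectivity is established branch-by-branch on the part of $\mathcal{C}$ lying below $\varepsilon=\pi/2$, and the converse rests on the squeeze $\varepsilon(c)\in[\varepsilon^*(c),\pi-\varepsilon^*(c)]$ with $\pi-\varepsilon^*(c)\to\pi/2$ as $c\to-b^+$. Two points in your write-up do not go through as stated, though both are repairable. First, in the forward direction your sign bookkeeping ($g'>0$, $\partial_cF_{a,b}<0$, $\partial_uF_{a,b}<0$) does settle the sign of $d\varepsilon/dc$ for Eq.~\eqref{eq.1} and for the equations of the form $2nF_{a,b}(c,-c/b)-F_{a,b}(c,u)=2\pi$, but \emph{not} for Eq.~\eqref{eq.even2}, where the numerator $2ng'(c)+\partial_cF_{a,b}$ mixes terms of opposite sign. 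This is harmless only because, under the contrapositive hypothesis that $\varepsilon$ never attains $\pi/2$, the curve never reaches those branches: a further intersection with $\mathcal{S}$ requires crossing $\varepsilon=\pi/2$ first, so only Eq.~\eqref{eq.1} and Eq.~\eqref{eq.SIE1} occur --- which is exactly the restriction the paper makes (it proves monotonicity on the \eqref{eq.SIE1} branch by an integral comparison rather than by differentiation, but that difference is cosmetic). Second, in the converse your opening step --- computing $d\varepsilon/dc<0$ at $c^*$ on the \eqref{eq.SIE2} branch to force $\varepsilon>\pi/2$ on a left neighbourhood --- needs more than the stated sign facts: strict negativity requires observing in addition that $\partial_cF_{a,b}(c^*,u)\to0$ as $u\to1$, so that the numerator reduces to $2g'(c^*)>0$. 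The paper avoids this delicate one-sided limit altogether by a case split: if $\varepsilon$ dips below $\pi/2$ somewhere on $(-b,c^*)$, non-injectivity is immediate from the intermediate value theorem (values just under $\pi/2$ are then attained on both sides of $c^*$); otherwise $\varepsilon\geq\pi/2$ on $(-b,c^*)$ and the squeeze argument applies. Your local-maximum formulation of the squeeze is equivalent to the paper's, which instead shows that the image of a neighbourhood of $c^*$ contains a band around $\pi/2$ into which $\varepsilon(\tilde c)$ is forced for $\tilde c$ close enough to $-b$.
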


\begin{proof}
As the function $\varepsilon(c)$ is not injective, there exists an intersection point $(c_1,\varepsilon(c_1))$ between $\mathcal{C}$ and $\mathcal{S}$; this is because we proved that the function $\varepsilon(c)$ is injective in the part described by the first implicit equation.
Since we have an intersection point, we analyze Eq.~\eqref{eq.SIE1} and prove that $\varepsilon(c)$ is an strictly decreasing function when it is described by this last equation. Let us consider $c_3<c_2<c_1<0$. The period $T_\phi(c)$ being decreasing when $c$ decreases implies $2F_{a,b}(c_3,\frac{-c_3}{b})<2F_{a,b}(c_2,\frac{-c_2}{b})$. Thus, we have
\begin{equation}\label{eq.value}
2F_{a,b}\left(c_3,\frac{-c_3}{b}\right)-2\pi<2F_{a,b}\left(c_2,\frac{-c_2}{b}\right)-2\pi .
\end{equation}
Using $c_3<c_2$, we get
\begin{equation}\label{eq.fu}
0<\frac{1-bx}{\sqrt{x(1-x)(1+\frac{b}{a}x)(x+\frac{c_2}{b})}}<\frac{1-bx}{\sqrt{x(1-x)(1+\frac{b}{a}x)(x+\frac{c_3}{b})}} .
\end{equation}
Thus, if $u_2$ and $u_3$ are such that
\begin{align*}
\frac{1}{\sqrt{ab}}\int_{u_2}^1\frac{1-bx}{\sqrt{x(1-x)(1+\frac{b}{a}x)(x+\frac{c_2}{b})}}&=2F_{a,b}\left(c_2,\frac{-c_2}{b}\right)-2\pi,\\
\frac{1}{\sqrt{ab}}\int_{u_3}^1 \frac{1-bx}{\sqrt{x(1-x)(1+\frac{b}{a}x)(x+\frac{c_3}{b})}}&=2F_{a,b}\left(c_3,\frac{-c_3}{b}\right)-2\pi ,
\end{align*}
using Eq.~\eqref{eq.value} and~\eqref{eq.fu}, we obtain $u_2<u_3$, which is equivalent to
\begin{equation}
\varepsilon(c_2)=\arcsin\left(\sqrt{u_2}\right)<\varepsilon(c_3)=\arcsin\left(\sqrt{u_3}\right).
\end{equation}
We just have proved that when the function $\varepsilon(c)$ is described by Eq.~\eqref{eq.SIE1}, it is a strictly decreasing function. For this reason $\varepsilon(c)$ is injective in the region where it is described by the first implicit equation and by Eq.~\eqref{eq.SIE1}. Since $\varepsilon(c)$ is not injective globally we conclude that there exists a point $(c^*,\frac{\pi}{2})\in\mathcal{C}$, with $c^*\in(-b,0)$.

Conversely, we have that $\varepsilon(c^*)=\frac{\pi}{2}$ and, notice that, for the TRE on the separatrix we have $\varepsilon(0)\in(0,\frac{\pi}{2})$. Thus, all the values in $[\varepsilon(0),\frac{\pi}{2}]$ are taken by the function $\varepsilon(c)$. If the function $\varepsilon(c)$ is lower than $\frac{\pi}{2}$ for values of $c$ in $(-b,c^*)$, then the function is not injective. If it is not the case, then there exists $\delta_1>0$ and $\delta_2>0$ such that
\begin{equation*}
(\frac{\pi}{2}-\delta_2,\frac{\pi}{2}+\delta_2)\subset\varepsilon\left((c^*-\delta_1,c^*+\delta_1)\right).
\end{equation*}
We also have
\begin{equation*}
\varepsilon(c)\in\left[\arcsin\left(\sqrt{\frac{-c}{b}}\right),\pi-\arcsin\left(\sqrt{\frac{-c}{b}}\right)\right] .
\end{equation*}
Since $\arcsin\left(\sqrt{\frac{-c}{b}}\right)\to\frac{\pi}{2}$, when $c\to-b$, then there exists $\tilde{c}$ such that $\tilde{c}<c^*-\delta_1$ and close enough to $-b$ so that
\begin{equation*}
\left[\arcsin\left(\sqrt{\frac{-\tilde{c}}{b}}\right),\pi-\arcsin\left(\sqrt{\frac{-\tilde{c}}{b}}\right)\right]\subset\left(\frac{\pi}{2}-\delta_2,\frac{\pi}{2}+\delta_2\right).
\end{equation*}
Finally, we conclude that $\varepsilon(\tilde{c})\in(\frac{\pi}{2}-\delta_2,\frac{\pi}{2}+\delta_2)$, \textit{i.e.}, the function $\varepsilon(c)$ is not injective.
\end{proof}

\begin{proof} [Proof of Theorem \ref{teo.Rinj}]
If the function $\varepsilon(c)$ is not injective, then there exists an intersection point $(c_1,\varepsilon(c_1))$ between $\mathcal{C}$ and $\mathcal{S}$ and a point $(c^*,\frac{\pi}{2})\in\mathcal{C}$, where $c^*\in(-b,c_1)$. This is a consequence of the proof of Proposition~\ref{prop.inj}. Using Lemma~\ref{lema.c}, we obtain that $\mathcal{I}<1$.
Conversely, if $\mathcal{I}<1$ using Lemma~\ref{lema.c}, we deduce that there exists an intersection point $(c_1,\varepsilon(c_1))$ between $\mathcal{C}$ and $\mathcal{S}$ and a point $(c^*,\frac{\pi}{2})\in\mathcal{C}$, where $c^*\in(-b,c_1)$ that implies that the function $\varepsilon(c)$ is not injective (due to Proposition~\ref{prop.inj}). The first claim follows.

The second claim now follows directly from Proposition~\ref{prop.physical}.
\end{proof}

\subsection{The curve $\mathcal{C}$ in the non-physical case}
Theorem~\ref{teo.Rinj} shows that for physical rigid bodies,  the curve $\mathcal{C}$ is described by an injective decreasing function, while, for non-physical systems, this function is non-injective. In this latter case, the function has an oscillating behavior with respect to the curve $\mathcal{S}$, i.e. the curves $\mathcal{C}$ and $\mathcal{S}$ intersect several times. Moreover, the number of intersection points depends on the value of the geometric constant $\mathcal{I}$, the following Theorem states this dependence. Numerical calculations illustrate these different behaviors in Fig.~\ref{figure2}.


\begin{figure}[h!]
  \centering
  \includegraphics[width=0.8\linewidth]{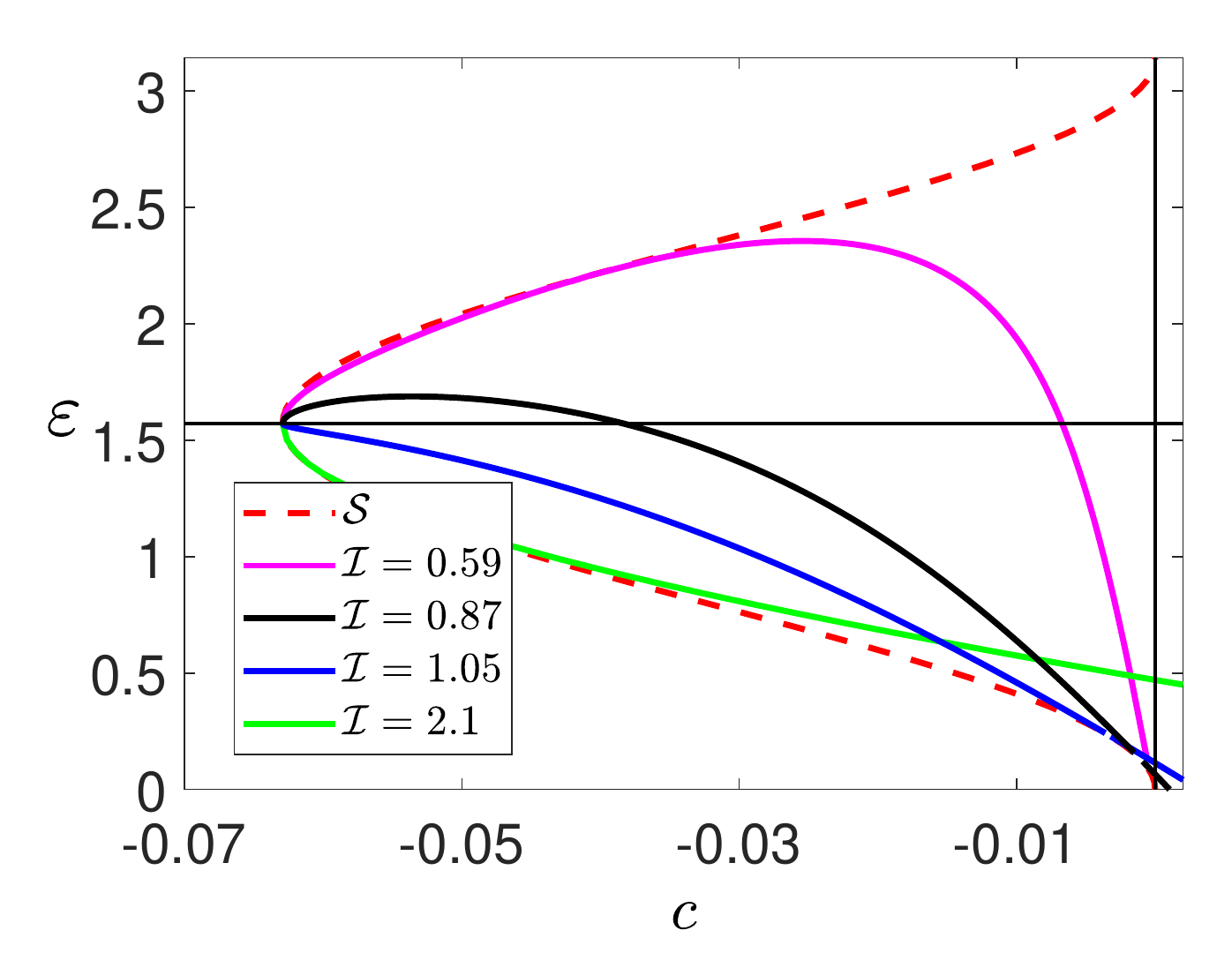}
    \caption{Plot of the function $\varepsilon(c)$. We have used a fixed value of $b=0.0629$, and varied the value of $a$. The solid purple, black, blue and green curves correspond respectively to $a=39.95, a=18.27$, $a=12.65$ and $a=3.1$, which leads to $\mathcal{I}=0.59$, $0.87$, 1.05 and 2.1. The dahsed red curve represents the set $\mathcal{S}$. The black horizontal and vertical lines have the equations $\varepsilon=\pi/2$ and $c=0$, respectively.
\label{figure2}}
\end{figure}

\begin{teo}\label{teo.general}
For every $n\in\mathbb{N}\setminus\{0\}$, if the parameters $a$ and $b$ are such that $\mathcal{I}<\frac{2}{2n+1}$, then there exist at least $n$ intersection points between $\mathcal{C}$ and $\mathcal{S}$. Moreover, if $(c_n,\varepsilon(c_n))$ are the coordinates of the $n$-th intersection point, then there exists $c^*\in(-b,c_n)$ such that $\varepsilon(c^*)=\frac{\pi}{2}$ and another intersection point $(c_{int},\varepsilon(c_{int}))$ such that $c_{int}\in(-b,c_n)$.
\end{teo}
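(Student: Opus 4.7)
My plan is to reduce the whole theorem to counting zeros of the single function $\varphi(u):=F_{a,b}(-bu,u)$, using the monotonicity established in Lemma~\ref{lema.decreasing} and the limit $\varphi(u)\to\pi\mathcal{I}$ as $u\to 1^-$ from Lemma~\ref{lema.F}. Intersections with $\mathcal{S}$ will correspond to the level sets $\varphi(u)=2\pi/(2k+1)$, while points of $\mathcal{C}$ with $\varepsilon=\pi/2$ lying in the appropriate region will correspond to $\varphi(u)=\pi/n$. Since $\pi/n=2\pi/(2n)$ interleaves the odd-denominator values, a single IVT argument then delivers both objects required by the theorem.

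For the first claim, I would note that any point of $\mathcal{C}\cap\mathcal{S}$ satisfies $c=-b\sin^2\varepsilon=-bu$, so both endpoints $\psi_i=-\pi/2+\varepsilon$ and $\psi_f=\pi/2-\varepsilon$ are turning points of the oscillation in $\psi$. Since the trajectory reaches the opposite turning point only at odd multiples of the half-period, the condition $\Delta\phi=2\pi$ becomes $2\pi=(2k+1)T_\phi(c)/2$ for some $k\in\mathbb{N}$; combined with $T_\phi(c)=2\varphi(u)$ from Eq.~\eqref{eq.period}, this yields the family
\[
\varphi(u)=\frac{2\pi}{2k+1},\qquad k=0,1,2,\dots
\]
Under the hypothesis $\mathcal{I}<2/(2n+1)$, the $n+1$ values $2\pi/(2k+1)$ for $k\in\{0,\dots,n\}$ all strictly exceed $\pi\mathcal{I}$, so by monotonicity and IVT each equation has a unique solution $u_{k+1}\in(0,1)$. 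Writing $0<u_1<u_2<\cdots<u_{n+1}<1$ and setting $c_j=-bu_j$ produces $n+1\geq n$ distinct intersection points ordered $-b<c_{n+1}<\cdots<c_2<c_1<0$, with $\varphi(u_j)=2\pi/(2j-1)$.

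For the second claim, the intersection $c_{int}:=c_{n+1}\in(-b,c_n)$ is immediate. To produce $c^\ast\in(-b,c_n)$ with $\varepsilon(c^\ast)=\pi/2$, I would invoke the general implicit equation describing $\mathcal{C}$ in the region between the $n$-th and $(n+1)$-th intersection, as derived in~\ref{ap.GIE}. Specialised to $\varepsilon=\pi/2$, i.e.\ $\sin^2\varepsilon=1$ so that $F_{a,b}(c,\sin^2\varepsilon)=0$, that equation collapses to $2n\,F_{a,b}(c,-c/b)=2\pi$, which is $\varphi(u)=\pi/n$. Since $2\pi/(2n-1)>\pi/n=2\pi/(2n)>2\pi/(2n+1)$, the target $\pi/n$ strictly interleaves $\varphi(u_n)$ and $\varphi(u_{n+1})$, so IVT produces a unique $u^\ast\in(u_n,u_{n+1})$ with $\varphi(u^\ast)=\pi/n$; then $c^\ast=-bu^\ast\in(c_{n+1},c_n)\subset(-b,c_n)$ has $\varepsilon(c^\ast)=\pi/2$. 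This is a direct generalisation of Lemma~\ref{lema.c}, which it recovers for $n=1$.

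The main obstacle is not the IVT bookkeeping but verifying the coefficient $2n$ in the implicit equation of the region past the $n$-th intersection. Each crossing of $\mathcal{S}$ by $\mathcal{C}$ toggles the sign of $\cos\theta$, each crossing of the line $\varepsilon=\pi/2$ flips the sign of the $F_{a,b}(c,\sin^2\varepsilon)$ term, and between consecutive visits to a turning point the travel-time integral picks up an extra full period $T_\phi=2F_{a,b}(c,-c/b)$. An induction on $n$, built on the template already exhibited by Eqs.~\eqref{eq.SIE1}--\eqref{eq.SIE2} for the second region, then shows that the coefficient in front of $F_{a,b}(c,-c/b)$ is exactly $2n$ with the residual term being $\pm F_{a,b}(c,\sin^2\varepsilon)$; this identity is the content of~\ref{ap.GIE} that I would import wholesale.
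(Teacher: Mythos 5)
Your proposal is correct and follows essentially the same route as the paper: the intersection points are characterized by the level sets $F_{a,b}(-bu,u)=2\pi/(2k+1)$ and the $\varepsilon=\pi/2$ points by $F_{a,b}(-bu,u)=\pi/n$, and everything is settled by the monotonicity and the limit $\pi\mathcal{I}$ from Lemmas~\ref{lema.decreasing} and~\ref{lema.F} together with the implicit equations of~\ref{ap.GIE}. The only difference is organizational: you solve all the level-set equations at once by the intermediate value theorem and make the interleaving $2\pi/(2n-1)>\pi/n>2\pi/(2n+1)$ explicit, whereas the paper runs an induction --- but that induction is precisely the region-by-region verification that $\mathcal{C}$ actually reaches each level set and is there governed by the implicit equation with coefficient $2n$, i.e., the step you defer to the appendix.
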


 The proof generalizes the ideas used in the preceding Lemmas and is given in~\ref{ap.GIE}.



\section{Global behavior of the tennis racket effect with respect to the geometric parameters $a$ and $b$}\label{sec.parameters}
\subsection*{Evolution of $c_0$ and $\varepsilon_0$ as a function of $a$}\label{sub.achange}
Recall that $c_0$ and $\varepsilon_0$, introduced in Proposition~\ref{prop.perf}, represent respectively a perfect TRE and the TRE on the separatrix. In this section, we prove that the values of $c_0(a,b)$ and $\varepsilon_0(a,b)$ decrease when $a$ increases. This result allows to analyze the behavior of $\mathcal{C}$ when the parameter $a$ varies. Since we consider different values of $(a,b)$, in this section we use the notation $\mathcal{C}_{a,b}$ instead of $\mathcal{C}$.

\begin{lema}
Let us consider two values of the parameter $a$ such that $0<a_1<a_2$ and the solutions $(0,c_0(a_1))\in\mathcal{C}_{a_1,b}$, $(0,c_0(a_2))\in\mathcal{C}_{a_2,b}$. Then $c_0(a_1)>c_0(a_2)$.
\end{lema}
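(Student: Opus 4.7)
The plan is to exploit two independent strict monotonicities of the bivariate function $(a,c)\mapsto F_{a,b}(c,0)$ that defines $c_0$, namely its strict decrease in $c$ (already used in Section~\ref{sub.Rotating}) together with a new strict decrease in $a$. Combined with the defining identity $F_{a_i,b}(c_0(a_i),0)=2\pi$ for $i=1,2$, these monotonicities force $c_0(a)$ itself to be strictly decreasing.

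First I would rewrite the integral so that the $a$-dependence becomes transparent. Using $\sqrt{ab}\,\sqrt{1+\tfrac{b}{a}x}=\sqrt{b(a+bx)}$, the expression from Proposition~\ref{prop.perf} becomes
\begin{equation*}
F_{a,b}(c,0)=\frac{1}{\sqrt{b}}\int_0^1\frac{1-bx}{\sqrt{x(1-x)(a+bx)(x+c/b)}}\,dx.
\end{equation*}
Because a perfect TRE in the first implicit equation requires $b\cos^2\psi+c\geq0$ on $[-\pi/2,\pi/2]$, one has $c_0>0$; this is also visible from the fact, recorded in the proof of Proposition~\ref{prop.perf}, that $F_{a,b}(c,0)\to\infty$ as $c\to0^+$. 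Hence the integrand above is strictly positive on $(0,1)$.

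Next I would differentiate under the integral sign to record the two monotonicities. Since $(a+bx)^{-1/2}$ and $(x+c/b)^{-1/2}$ are strictly decreasing in $a$ and in $c$ respectively, while the remaining factor $\tfrac{1-bx}{\sqrt{x(1-x)}}$ is strictly positive on $(0,1)$, one obtains
\begin{equation*}
\partial_a F_{a,b}(c,0)<0\qquad\text{and}\qquad \partial_c F_{a,b}(c,0)<0\qquad(c>0).
\end{equation*}
Now assume $0<a_1<a_2$. Strict decrease in $a$ at the fixed value $c=c_0(a_1)$ gives
\begin{equation*}
F_{a_2,b}(c_0(a_1),0)\,<\,F_{a_1,b}(c_0(a_1),0)\,=\,2\pi\,=\,F_{a_2,b}(c_0(a_2),0),
\end{equation*}
and strict decrease of $c\mapsto F_{a_2,b}(c,0)$ then forces $c_0(a_2)<c_0(a_1)$, which is the claim.

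The argument contains no real obstacle; the only slightly non-routine step is recognising the algebraic rearrangement that pulls the $\sqrt{ab}$ prefactor inside as $\sqrt{b(a+bx)}$, after which monotonicity in $a$ is immediate by inspection and the conclusion is a one-line squeeze between two evaluations of $F$.
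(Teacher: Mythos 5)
Your proof is correct and takes essentially the same route as the paper's: both arguments rest on the pointwise strict decrease of the integrand in $a$, combined with the strict decrease in $c$ and the fact that both solutions give the same value of the defining integral. Your insertion of the intermediate quantity $F_{a_2,b}(c_0(a_1),0)$ is just a cleaner way of writing the comparison that the paper performs directly on the $\psi$-integrands.
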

\begin{proof}
We have that $(0,c_0(a_1))\in\mathcal{C}_{a_1,b}$ and $(0,c_0(a_2))\in\mathcal{C}_{a_2,b}$. Thus
\begin{equation*}
\pi=\int_{0}^{\pi/2} \frac{(1-b\cos^2(\psi))d\psi}{\sqrt{(b\cos^2(\psi)+a_1)(b\cos^2(\psi)+c_0(a_1))}}=\int_{0}^{\pi/2} \frac{(1-b\cos^2(\psi))d\psi}{\sqrt{(b\cos^2(\psi)+a_2)(b\cos^2(\psi)+c_0(a_2))}}.
\end{equation*}
Since $a_1<a_2$, then $\sqrt{b\cos^2(\psi)+a_1}<\sqrt{b\cos^2(\psi)+a_2}$. This latter inequality implies
\begin{equation*}
\frac{1-b\cos^2(\psi)}{\sqrt{b\cos^2(\psi)+a_2}}<\frac{1-b\cos^2(\psi)}{\sqrt{b\cos^2(\psi)+a_1}} .
\end{equation*}
From the two previous equations, we obtain
\begin{equation*}
\frac{1}{\sqrt{b\cos^2(\psi)+c_0(a_2)}}>\frac{1}{\sqrt{b\cos^2(\psi)+c_0(a_1)}},
\end{equation*}
which leads to
\begin{equation*}
c_0(a_1)>c_0(a_2).
\end{equation*}
\end{proof}

\begin{lema}\label{lema.a8}
Let us consider two values of the parameter $a$ such that $0<a_1<a_2$ and the solutions $(\varepsilon_0(a_1),0)\in\mathcal{C}_{a_1,b}$, $(\varepsilon_0(a_2),0)\in\mathcal{C}_{a_2,b}$. Then $\varepsilon_0(a_1)>\varepsilon_0(a_2)$.
\end{lema}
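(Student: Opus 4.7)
The plan is to mirror the comparison argument used in the preceding lemma, swapping the roles of $c$ and $\varepsilon$. First I would specialize the implicit equation~\eqref{eq.2} to $c=0$, which characterizes the separatrix TRE point $(c,\varepsilon)=(0,\varepsilon_0(a))\in\mathcal{C}_{a,b}$ by
\begin{equation*}
\pi = \int_0^{\pi/2 - \varepsilon_0(a)} \frac{1 - b\cos^2\psi}{\sqrt{(b\cos^2\psi + a)\,b\cos^2\psi}}\,d\psi,
\end{equation*}
writing this identity for both $a=a_1$ and $a=a_2$. Existence of $\varepsilon_0(a)$ in $(0,\pi/2)$ is the content of Proposition~\ref{prop.perf}, and uniqueness is immediate from the strict monotonicity in $\varepsilon$ of the right-hand side (the integrand is positive, so shortening the interval of integration strictly decreases its value).

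Next I would note that for every fixed $\psi\in(0,\pi/2)$, the integrand $f_a(\psi) := (1 - b\cos^2\psi)/\sqrt{(b\cos^2\psi + a)\,b\cos^2\psi}$ is strictly decreasing in $a$, since $1 - b\cos^2\psi > 0$ and $b\cos^2\psi + a$ is strictly increasing in $a$. Assuming for contradiction that $\varepsilon_0(a_1) \leq \varepsilon_0(a_2)$, hence $\pi/2 - \varepsilon_0(a_2) \leq \pi/2 - \varepsilon_0(a_1)$, and combining the enlarged interval of integration with the strict pointwise inequality $f_{a_2} < f_{a_1}$ would yield
\begin{equation*}
\pi = \int_0^{\pi/2 - \varepsilon_0(a_2)} f_{a_2}(\psi)\,d\psi \leq \int_0^{\pi/2 - \varepsilon_0(a_1)} f_{a_2}(\psi)\,d\psi < \int_0^{\pi/2 - \varepsilon_0(a_1)} f_{a_1}(\psi)\,d\psi = \pi,
\end{equation*}
which is absurd, forcing $\varepsilon_0(a_1) > \varepsilon_0(a_2)$.

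The one technical point worth verifying is integrability: with $c=0$ the inner square root collapses to $\sqrt{b}\,|\cos\psi|$, so $f_a$ has a non-integrable $1/\cos\psi$ singularity at $\psi=\pi/2$. This is not a real obstacle, since Proposition~\ref{prop.perf} already guarantees $\varepsilon_0(a)>0$, so every upper limit in the chain above stays strictly less than $\pi/2$ and all integrals are genuine finite numbers; in fact, the same blow-up is what makes the defining equation $G(a,\varepsilon)=\pi$ uniquely solvable in $\varepsilon$ for each $a$. The main (and essentially only) substantive step is therefore the pointwise monotonicity of $f_a$ in $a$, after which the comparison chain is automatic.
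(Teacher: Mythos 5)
Your proof is correct and follows essentially the same route as the paper: write the defining integral identity at $c=0$ for both $a_1$ and $a_2$, observe that the integrand is pointwise strictly decreasing in $a$, and conclude that the interval of integration must be longer for $a_2$, i.e.\ $\varepsilon_0(a_2)<\varepsilon_0(a_1)$ (you phrase this as a contradiction, the paper states it directly, but the content is identical). Your added remark on integrability near $\psi=\pi/2$ is a sensible precaution the paper leaves implicit.
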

\begin{proof}
We follow the same idea as in the previous proof. The fact that $(\varepsilon_0(a_1),0)\in\mathcal{C}_{a_1,b}$ and $(\varepsilon_0(a_2),0)\in\mathcal{C}_{a_2,b}$ leads to
\begin{equation*}
\pi=\int_{0}^{\pi/2-\varepsilon_0(a_1)} \frac{(1-b\cos^2(\psi))d\psi}{\sqrt{(b\cos^2(\psi)+a_1)b\cos^2(\psi)}}=\int_{0}^{\pi/2-\varepsilon_0(a_2)} \frac{(1-b\cos^2(\psi))d\psi}{\sqrt{(b\cos^2(\psi)+a_2)b\cos^2(\psi)}}.
\end{equation*}
Since $a_1<a_2$, then $\sqrt{(b\cos^2(\psi)+a_1)b\cos^2(\psi)}<\sqrt{(b\cos^2(\psi)+a_2)b\cos^2(\psi)}$, giving
\begin{equation*}
\frac{1-b\cos^2(\psi)}{\sqrt{(b\cos^2(\psi)+a_2)b\cos^2(\psi)}}<\frac{1-b\cos^2(\psi)}{\sqrt{(b\cos^2(\psi)+a_1)b\cos^2(\psi)}} .
\end{equation*}
From the two previous equations, we get
\begin{equation*}
\frac{\pi}{2}-\varepsilon_0(a_1)<\frac{\pi}{2}-\varepsilon_0(a_2).
\end{equation*}
We conclude that $\varepsilon_0(a_1)>\varepsilon_0(a_2)$.
\end{proof}
All these properties are verified for any allowed value of $a$ and $b$. Notice that, when we change the value of $a$, but not the value of $b$ we have  the same limit curve $\mathcal{S}_b$ that cannot be crossed, but the set $\mathcal{C}_{a,b}$ is modified. According to these results, we know that when the value of $a$ increases the curve $\mathcal{C}_{a,b}$ goes down, in the sense that its intersection points with the two axes decrease (See Figs.~\ref{fig.Solution} and~\ref{fig:b}). An illustrative numerical example is given in Fig.~\ref{fig:b} for physical and non-physical rigid bodies. We observe that the evolution of $\varepsilon$ as a function of $c$ is similar in the two cases. When $a$ goes to $\infty$ for a fixed value of $b$, the curve $\mathcal{C}$ is tangent to $\mathcal{S}$ in $c=\varepsilon=0$. This limit case is described in~\cite{mardesic2020}.

\begin{figure}[h!]
    \centering
    \includegraphics[scale=.8]{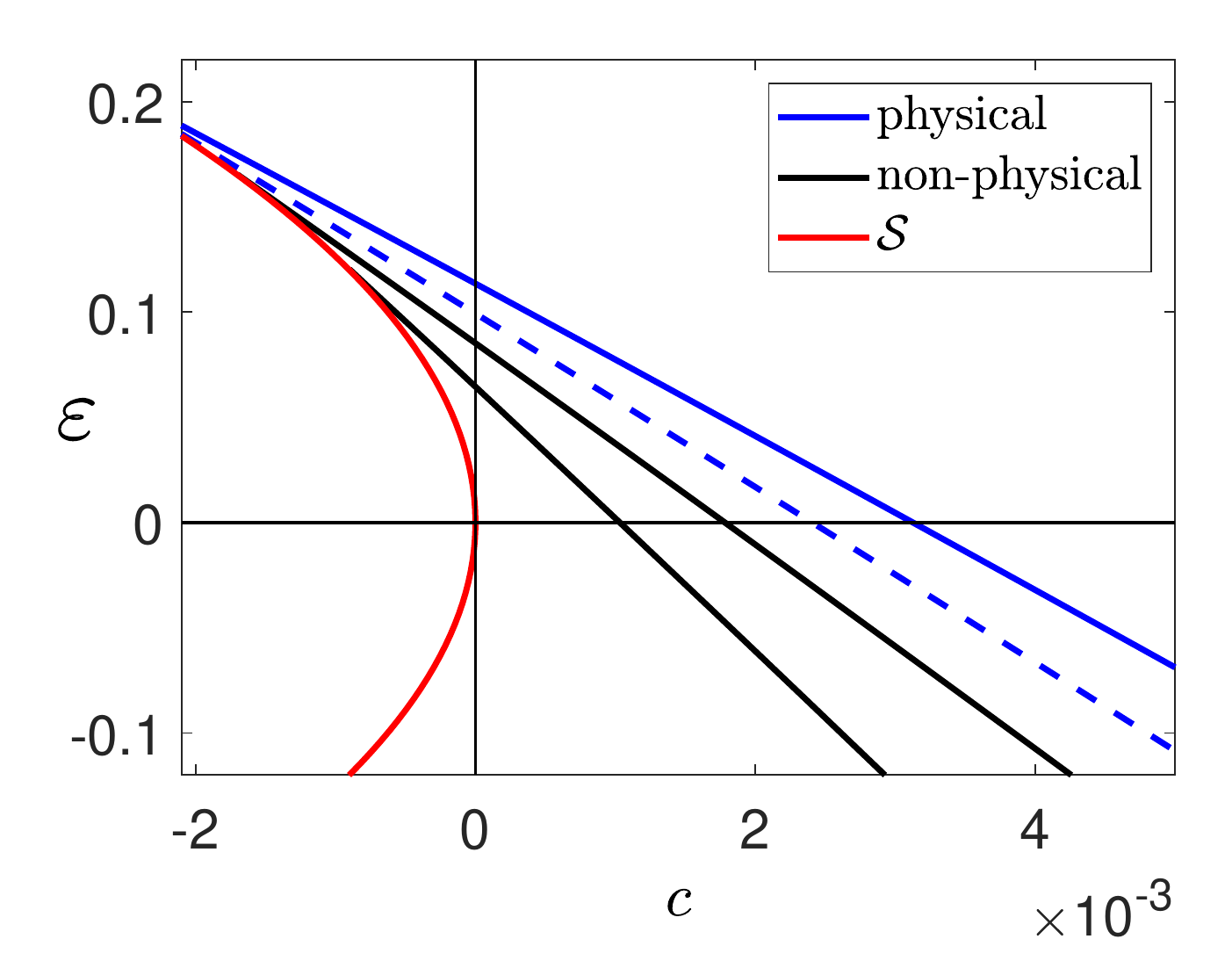}
    \caption{Plot of the defect $\varepsilon$ as a function of $c$ (curve $\mathcal{C}_{a,b}$) for different values of $a$, the parameter $b$ being fixed to 0.06. The parameter $a$ is respectively set to 12.65, 13.89, 15.38 and 18.27 (solid blue, dashed blue and 2 solid black lines), which leads to $\mathcal{I}$ equal to 1.045, 1, 0.951 and 0.872. The horizontal and vertical solid lines indicate respectively the position of a perfect TRE ($\varepsilon=0$) and of the separatrix ($c=0$). The set $\mathcal{S}$ is plotted in red.}
    \label{fig:b}
\end{figure}

\begin{rmk}
The only possibility to have a perfect TRE on the separatrix is to consider the abstract limit $a\to\infty$. This case is described in~\cite{mardesic2020}. Moreover, as explained in Sec.~\ref{sec.const}, the physical constraint corresponds to $2b+ab \leq 1$. If this inequality is fulfilled when $a\to\infty$, then $b\to 0$. This case ($a\to\infty$, $b\to 0$) is described below. Note that when $a\to\infty$ and $b\to 0$, the physical constraint becomes $ab \leq 1$.
\end{rmk}

\subsection*{Behavior of the flip defect $\varepsilon$ when $a\rightarrow \infty$ and $b \rightarrow 0$}\label{sub.limit}
To derive an explicit expression for the flip defect $\varepsilon$ in the limit case $a\rightarrow \infty$ and $b \rightarrow 0$, we start with the implicit equation
\begin{equation}
\label{eqimpli1} \int_{0}^{\pi/2 - \varepsilon} \frac{1- b \cos^2 \psi}{\sqrt{(a+b \cos^2 \psi)(c+b \cos^2 \psi)}}  d \psi = \int_{0}^{\pi/2 - \varepsilon} \frac{1- b \cos^2 \psi}{\sqrt{ab(1+\frac{b}{a} \cos^2 \psi)(\frac{c}{b}+ \cos^2 \psi)}}  d \psi = \pi.
\end{equation}
Therefore, when $a\rightarrow \infty$ and $b \rightarrow 0$, we have
\begin{equation}
\label{eqimpli2} \frac{1}{\sqrt{ab}}\int_{0}^{\pi/2 - \varepsilon} \frac{1}{\sqrt{\frac{c}{b}+ \cos^2 \psi}}  d \psi = \pi.
\end{equation}
If we define the parameters $\chi$ and $\beta$ as follows
\begin{equation}
\label{twis3} \chi = a b, \;\;\;\;\;\;\; \beta =  \frac{c}{b},
\end{equation}
and use the identity $\cos^2 \psi = 1 - \sin^2 \psi$, then Eq. (\ref{eqimpli2}) can be written as
\begin{equation}
\label{eqimpli3} \frac{1}{\sqrt{\chi(1+\beta)}}\int_{0}^{\pi/2 - \varepsilon} \frac{1}{\sqrt{1 - \frac{1}{1+\beta}\sin^2 \psi}}  d \psi= \frac{1}{\sqrt{\chi(1+\beta)}} F\big(\frac{\pi}{2} - \varepsilon\Big{|}\frac{1}{1+\beta}\big) = \pi,
\end{equation}
where the function $F$ is the incomplete elliptic integral of the first kind defined by
\begin{equation}
\label{twis2} F(\varphi|m) = \int_{0}^{\varphi}
\frac{d\theta}{\sqrt{1-m \sin ^2\theta }} = \int_{0}^{\sin \varphi} \frac{1}{\sqrt{(1-t^2)(1-m t^2)}} dt.
\end{equation}
Therefore, from Eq. (\ref{eqimpli3}), we can obtain the flip defect $\varepsilon$
\begin{equation}
\label{twis1} \varepsilon = \frac{\pi}{2}-\text{am}\left(\pi  \sqrt{\chi(1+\beta)} \Big{|}\frac{1}{1+\beta
}\right),
\end{equation}
where am is the Jacobi amplitude.
Recall that the Jacobi amplitude is defined as the
inverse of the incomplete elliptic integral of the first kind, i.e. if $ u =  F(\varphi|m) $, then $\varphi =$am$(u
|m)$.

According to Lemma~\ref{lema.a8}, we know that the value of the flip defect on the separatrix $\varepsilon_0$ decreases when $a$ increases, and so the smallest possible value of $\varepsilon_0$ will occur when $a\rightarrow \infty$. Therefore, we can use expression~(\ref{twis1}) to determine the minimum value of $\varepsilon_0$ subject to the physical constraint $\chi \leq 1$. It turns out that in the case of $c<0$, there is a
minimum value for $\varepsilon$ when $\chi \rightarrow
1$ and $\beta \rightarrow 0$. In this limit case, we have
\begin{equation}
\label{twis4} \varepsilon_{0,min} = \frac{\pi }{2}-\text{am}(\pi|1)
= \pi -2 \arctan\left(e^{\pi }\right) = 0.086374.
\end{equation}
While in the case $c > 0$, there is
a minimum value of $\beta \neq 0$ such that $\varepsilon = 0$ that happens
when $\chi \rightarrow 1$, namely
\begin{equation}
\label{twis5} \frac{\pi}{2}-\text{am}\left(\pi  \sqrt{(1+\beta)} \Big{|}\frac{1}{1+\beta
}\right) =\, 0
 \;\;\; \Rightarrow \;\;\; \beta = 0.028973.
\end{equation}

\begin{figure}[h!]
  \centering
  \includegraphics[width=0.8\linewidth]{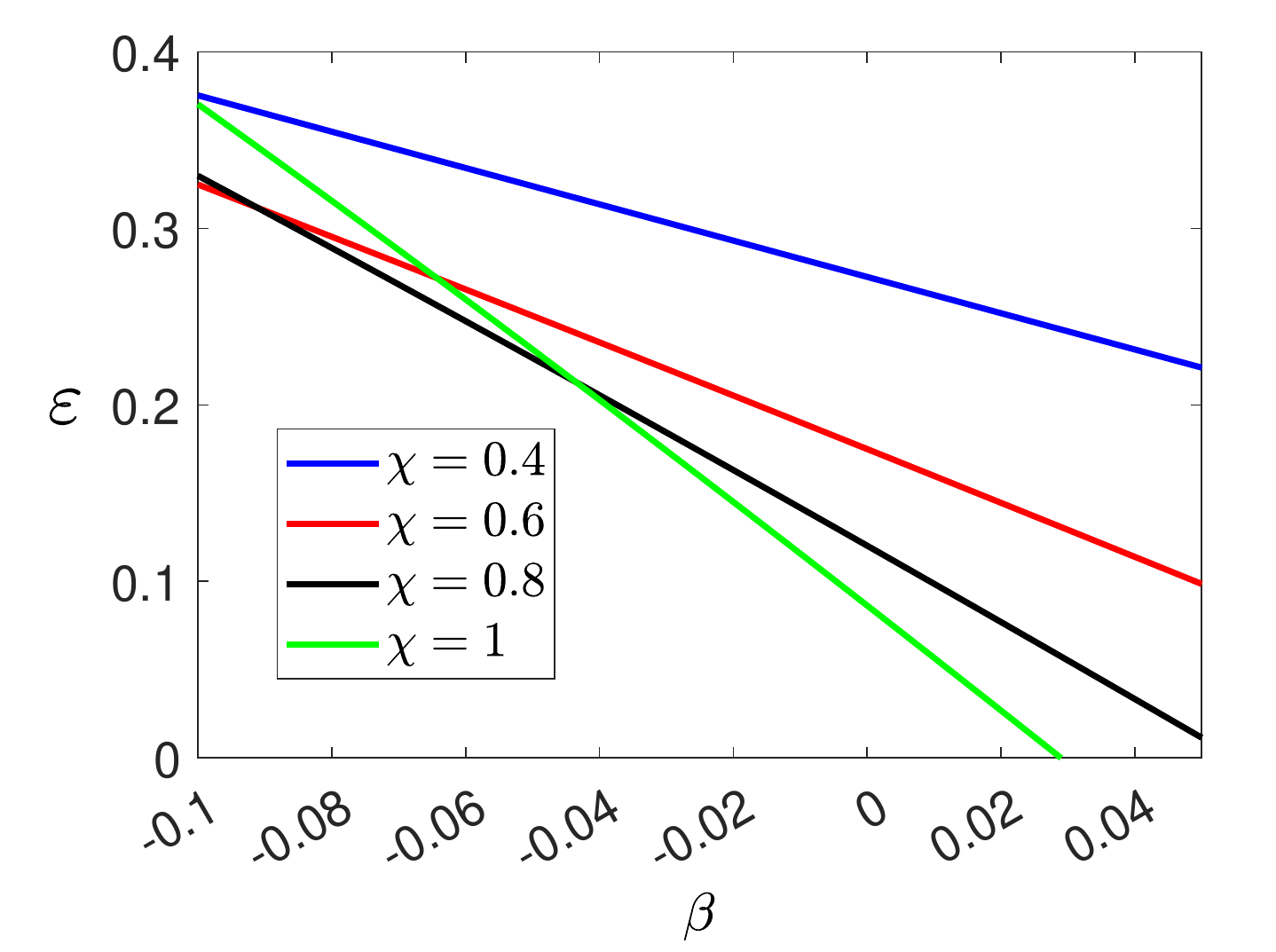}
    \caption{Plot of the function $\varepsilon = \frac{\pi}{2}-\text{am}\left(\pi  \sqrt{\chi(1+\beta)} \Big{|}\frac{1}{1+\beta
}\right) $. The values of $\chi=1$, $0.8$, $0.6$ and $\chi=0.4$ correspond to the green, black, red and blue curves respectively. \label{figureasym1}}
\end{figure}
Figure~\ref{figureasym1} illustrates the analysis carried out above. For instance, we observe that when we are very
close to the separatrix $\beta = 0$, there is no physical rigid
body (in the sense that $\chi \leq 1$) that can achieve a perfect $\Delta \psi =
\pi$ twist (i.e., $\varepsilon_0=0$). Nevertheless, we observe that when
$\beta = 0$, the smallest possible value of $\varepsilon_0$ happens
for $\chi \rightarrow 1$, and therefore
such a physical rigid body characterized by $\chi=1$ will perform the best quasi-perfect twist, $\Delta \psi =\pi-2\varepsilon_{0,min}=2.96884$, on the separatrix.


\section{Conclusion}\label{sec:conc}
In conclusion, we show in this study how the physical or non-physical nature of a rigid body can be detected in the behavior of two geometric effects, namely the Montgomery Phase and the Tennis Racket Effect, characterizing the free rotational dynamics of an asymmetric body. We establish that the signatures of the physical constraints are geometric, i.e. they correspond to qualitative changes of the two geometric effects. The existence of these signatures is in itself a remarkable and unexpected result. We also introduce two natural geometric constants, $\mathcal{I}$ and $\mathcal{J}$, depending on the moments of inertia of the body.  We show that the geometric signatures can only be characterized  in terms of these two constants. For instance, we prove that the evolution of the defect to a perfect TRE with respect to the energy of the system is described by an injective function if and only if $\mathcal{I}>1$, i.e. in the case where the physical constraints are satisfied. We illustrate these findings with numerical simulations to show the generic character of these signatures.

Our study opens the way to different problems to be explored. A first question concerns the experimental detection of the physical constraints. We establish a series of results regarding the geometric properties of the rigid body. An interesting problem would be to observe experimentally these geometric signatures. Such measurements can be performed quite simply for standard rigid bodies such as a mobile phone~\cite{phone2021}. Following the results established in~\cite{mardesic2020}, another open issue is to find geometric signatures in the Monster Flip effect, a freestyle skateboard trick which can be observed in the rotational dynamics of any asymmetric rigid body. Another intriguing problem is the role of such physical constraints in the quantum domain~\cite{zareBook,landauBookMQ,RMPsugny}. In line with this study, a first point to explore is the geometric signatures of the physical constraints on the quantum dynamical properties of the system. A possible signature is based on the quantum rotation number, a characteristic of the structure of the molecular spectrum, which is closely related to the Montgomery phase as shown in~\cite{vandamme2017b}. Attention must be paid to the rotational constants and their connection with the moments of inertia. An example is given by the water molecule for which the rotational constants for the vibrational ground state are given by $A=9.28$~cm$^{-1}$, $B=14.52$~cm$^{-1}$ and $C=27.88$~cm$^{-1}$~\cite{pawlowski2002}. However, only the equilibrium rotational constants are inversely proportional to the moments of inertia, with e.g. $A_e=h/(8\pi^2cI_A)$. In our example, we have $A_e=9.5077$~cm$^{-1}$, $B_e=14.5924$~cm$^{-1}$ and $C_e=27.2660$~cm$^{-1}$ leading to $\mathcal{I}=1.001$, which is approximately (up to experimental uncertainties) equal to 1 as expected for a planar classical molecule. A direct calculation from the first set of rotational constants leads to $\mathcal{I}=0.946$. The difference between the two values of rotational constants is due to vibrational corrections~\cite{pawlowski2002}. In this case, the quantum system cannot be considered as rigid which explains why the geometric constant is less than one. Finally, it would also be relevant to study the role of such physical constraints in the control of classical or quantum rigid bodies~\cite{RMPsugny,pozzoli2022,Pozzoli2022b,babilotte2016}.\\ \\
\noindent \textbf{Acknowledgments.} This work was supported by the Ecole Universitaire de Recherche (EUR)-EIPHI Graduate School (Grant No. 17- EURE-0002). We thank V. Boudon for helpful discussions about the quantum case. This research has been supported by the ANR-DFG
project ‘CoRoMo’.

\appendix
\section{The function $F_{a,b}(-bu,u)$}\label{appa}
In this appendix, we prove the properties of the function $F_{a,b}(-bu,u)$ that are used in the main text.

\begin{lema}\label{lema.decreasing}
The function $F_{a,b}(-bu,u)$ is a decreasing function on the interval $[0,1]$ and thus, its minimum value is taken in $u=1$.
\end{lema}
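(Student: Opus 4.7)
The plan is to eliminate the $u$-dependence from the integration limits in
$$F_{a,b}(-bu,u) = \frac{1}{\sqrt{ab}}\int_u^1 \frac{1-bx}{\sqrt{x(1-x)(1+\tfrac{b}{a}x)(x-u)}}\,dx$$
by a change of variables that moves it entirely into the integrand, making it routine to differentiate. The natural substitution is $x = u + (1-u)w$, which sends $[u,1]$ to $[0,1]$ and uses that $(1-x)(x-u) = (1-u)^2 w(1-w)$, so the square root $\sqrt{(1-x)(x-u)}$ absorbs the Jacobian $(1-u)\,dw$ up to a factor of $\sqrt{w(1-w)}$. Setting $\xi = \xi(w,u) := u + (1-u)w$ and
$$h(\xi) := \frac{1-b\xi}{\sqrt{\xi\bigl(1+\tfrac{b}{a}\xi\bigr)}},$$
the integral collapses to
$$F_{a,b}(-bu,u) \;=\; \frac{1}{\sqrt{ab}}\int_0^1 \frac{h(\xi(w,u))}{\sqrt{w(1-w)}}\,dw,$$
which has fixed endpoints and an integrable singularity of order $w^{-1/2}(1-w)^{-1/2}$.

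Next, I would differentiate under the integral sign. Since $\partial_u\xi = 1-w$, we get
$$\frac{d}{du}F_{a,b}(-bu,u) \;=\; \frac{1}{\sqrt{ab}}\int_0^1 h'(\xi(w,u))\,(1-w)\,\frac{dw}{\sqrt{w(1-w)}} \;=\; \frac{1}{\sqrt{ab}}\int_0^1 h'(\xi)\sqrt{\frac{1-w}{w}}\,dw.$$
The whole statement therefore reduces to showing $h'(\xi) < 0$ for $\xi \in (0,1]$. I would do this via the logarithmic derivative,
$$\frac{h'(\xi)}{h(\xi)} \;=\; -\frac{b}{1-b\xi} \;-\; \frac{1}{2\xi} \;-\; \frac{b/(2a)}{1+\tfrac{b}{a}\xi},$$
and observe that, using $0 < b < 1$ and $\xi \in (0,1]$, every term on the right is strictly negative (in particular $1-b\xi \geq 1-b > 0$), while $h(\xi) > 0$ since $1-b\xi > 0$. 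Hence $h'(\xi) < 0$, the weight $\sqrt{(1-w)/w}$ is non-negative with non-zero measure, and the derivative $\frac{d}{du}F_{a,b}(-bu,u)$ is strictly negative on $(0,1)$. This yields strict monotonic decrease, so the minimum on $[0,1]$ is attained at $u=1$.

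The only genuinely delicate point is justifying the differentiation under the integral sign near $u=0$ and $u=1$; elsewhere the dominated convergence hypothesis is immediate since $h$ is smooth on $(0,1]$ and the $w$-singularities are integrable uniformly in $u$ on compact subsets of $(0,1)$. I would simply restrict the statement to the open interval $(0,1)$ for the differentiation step (which suffices for strict decrease) and then handle the endpoint values separately: $F_{a,b}(-bu,u) \to +\infty$ as $u \to 0^+$ because the $x^{-1}$ singularity at the lower limit becomes non-integrable, and $F_{a,b}(-b,1) = \pi\mathcal{I}$ is obtained directly by plugging $u=1$ into the transformed integral (the $\xi$-factor becomes $h(1) = (1-b)/\sqrt{1+b/a}$ and $\int_0^1 dw/\sqrt{w(1-w)} = \pi$), which is exactly the computation of the subsequent Lemma~\ref{lema.F}. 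Together these confirm that the infimum over $[0,1]$ is realized at $u=1$.
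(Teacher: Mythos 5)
Your proof is correct and follows essentially the same route as the paper's: the paper uses the substitution $x = u+(1-u)y^2$ (your $w = y^2$) to fix the integration limits at $[0,1]$ and then differentiates under the integral sign, obtaining a manifestly negative integrand. Your separation of the integrand into the fixed weight $1/\sqrt{w(1-w)}$ times $h(\xi(w,u))$, with the sign of $h'$ read off from the logarithmic derivative, is a slightly cleaner organization of the same computation, and as you note it also delivers the $u\to 1$ limit of Lemma~\ref{lema.F} without the paper's separate error estimate.
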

\begin{proof}
Recall that the function $F_{a,b}(-bu,u)$ has been defined as
\begin{align}
\label{conn1} F_{a,b}(-bu,u)=\frac{1}{\sqrt{ab}}\int_u^1 \frac{1-bx}{\sqrt{x(1-x)(1+\frac{b}{a}x)(x-u)}}dx .
\end{align}
Let us perform the following change of variables
\begin{align}
\label{con2} x = u + (1 - u) y^2 ,
\end{align}
so for $x \in (u,1)$, in terms of $y$, we have $ y \in (0,1)$.
Therefore the integral (\ref{conn1}) can be written as
\begin{align}
\label{con3} F_{a,b}(-b u ,u) = \frac{2}{\sqrt{ab}}\int_0^1
\frac{1-b \left(u+(1-u) y^2\right)}{\sqrt{\left(1-y^2\right)
\left(u+(1-u) y^2\right) \left(1+\frac{b \left(u+(1-u)
   y^2\right)}{a}\right)}}dy.
\end{align}
Since now the limits of integration (\ref{con3}) do not depend
on the variable $u$, to compute the derivative $\partial_u
F_{a,b}(-b u ,u)$, we simply perform the following computation
\begin{align}
 \partial_u F_{a,b}(-b u ,u) &=
\frac{2}{\sqrt{ab}}\int_0^1 \partial_u \Big{[}\frac{1-b
\left(u+(1-u) y^2\right)}{\sqrt{\left(1-y^2\right) \left(u+(1-u)
y^2\right) \left(1+\frac{b \left(u+(1-u)
   y^2\right)}{a}\right)}}\Big{]}dy \nonumber \\
\label{con4} &= -\frac{b}{(a b)^{3/2}}  \int_0^1
\frac{\sqrt{1-y^2} \left(a+(a+2) b \left(u+(1-u)
y^2\right)\right)}{\left(\left(u+(1-u) y^2\right) \left(1+\frac{b
   \left(u+(1-u) y^2\right)}{a}\right)\right)^{3/2}} dy.
\end{align}
Note that we have an explicit minus sign in the right hand side of
Eq.~\eqref{con4}, and since all the quantities inside the integrand
are positive, we conclude that the derivative $\partial_u
F_{a,b}(-b u ,u)$ is always negative. This implies that the
function $F_{a,b}(-bu,u)$ is an injective decreasing function on the
interval $u \in [0,1]$, and so its global minimum
corresponds to the point where $u \rightarrow 1$.
\end{proof}

\begin{lema}\label{lema.F}
The function $F_{a,b}(-bu,u)$ has the following limiting behaviors
\begin{equation}
\lim_{u\to 0}F_{a,b}(-bu,u)=\infty ,
\end{equation}

\begin{equation}
\lim_{u\to 1}F_{a,b}(-bu,u)=\frac{1-b}{\sqrt{b(a+b)}}\pi .
\end{equation}
\end{lema}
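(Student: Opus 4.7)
The plan is to reuse the substitution $x=u+(1-u)y^2$ already introduced in the proof of Lemma~\ref{lema.decreasing}, which rewrites the integral with fixed limits as
\begin{equation*}
F_{a,b}(-bu,u) = \frac{2}{\sqrt{ab}}\int_0^1
\frac{1-b\bigl(u+(1-u)y^2\bigr)}{\sqrt{\bigl(1-y^2\bigr)\bigl(u+(1-u)y^2\bigr)\bigl(1+\tfrac{b}{a}(u+(1-u)y^2)\bigr)}}\,dy.
\end{equation*}
With fixed limits in $y$, the dependence on $u$ lies entirely in the integrand, and the two limits become questions about pointwise/monotone/dominated behavior of an integrand on $[0,1]$.

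For the limit $u\to 1$, I would argue by dominated convergence (or, equivalently, by invoking monotonicity in $u$ coming from Lemma~\ref{lema.decreasing} together with a pointwise limit). As $u\to 1$, the factor $u+(1-u)y^2\to 1$ uniformly on $[0,1]$, so the integrand converges pointwise on $(0,1)$ to
\begin{equation*}
\frac{1-b}{\sqrt{(1-y^2)\cdot 1\cdot(1+b/a)}} = \frac{(1-b)\sqrt{a}}{\sqrt{(1-y^2)(a+b)}}.
\end{equation*}
A uniform bound on the integrand for $u$ in a left neighborhood of $1$ follows from the fact that $u+(1-u)y^2\geq\min(u,y^2)$ is bounded below by some $c_0>0$ except near $y=0$, while near $y=0$ the bound $u+(1-u)y^2\geq u$ keeps the factor controlled; after dividing by the fixed and integrable $1/\sqrt{1-y^2}$ one obtains an $L^1$ dominator. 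Passing the limit inside and using $\int_0^1 dy/\sqrt{1-y^2}=\pi/2$ gives
\begin{equation*}
\lim_{u\to 1}F_{a,b}(-bu,u) = \frac{2}{\sqrt{ab}}\cdot\frac{(1-b)\sqrt{a}}{\sqrt{a+b}}\cdot\frac{\pi}{2} = \frac{(1-b)\pi}{\sqrt{b(a+b)}}.
\end{equation*}

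For the limit $u\to 0$, the strategy is to isolate the singular behavior near $y=0$. On a fixed small interval $[0,\delta]$ with $\delta<1/2$, the factors $1-by^2$, $1-y^2$ and $1+\tfrac{b}{a}y^2$ are bounded above and away from $0$, so the integrand is bounded below by a constant times $1/\sqrt{u+(1-u)y^2}$. The elementary estimate
\begin{equation*}
\int_0^\delta \frac{dy}{\sqrt{u+(1-u)y^2}} = \frac{1}{\sqrt{1-u}}\,\operatorname{arcsinh}\!\left(\sqrt{\frac{1-u}{u}}\,\delta\right) \longrightarrow +\infty\qquad (u\to 0^+)
\end{equation*}
then forces $F_{a,b}(-bu,u)\to\infty$. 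Together with the monotonicity established in Lemma~\ref{lema.decreasing} this fully pins down the two endpoint behaviors.

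The only delicate step is justifying the passage to the limit for $u\to 1$: one has to confirm the integrand stays dominated by an $L^1$ function uniformly for $u$ close to $1$ (the potential trouble spot being $y\to 0$ simultaneously with $u\to 1$, where $u+(1-u)y^2$ could in principle be small, but in fact remains $\geq u$ which is bounded away from $0$). Once that domination is in place, the remainder of the argument is a direct computation, and the $u\to 0$ case is handled by the explicit lower bound above.
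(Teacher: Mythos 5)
Your proof is correct, but it takes a somewhat different route from the paper's for the $u\to 1$ limit. The paper stays in the original $x$-variable on the moving interval $[u,1]$: it replaces the smooth factors of the integrand by their values at $x=1$, uses the exact identity $\int_u^1 \frac{dx}{\sqrt{(1-x)(x-u)}}=\pi$ to evaluate the resulting model integral, and then bounds the error by $\pi$ times the sup over $[u,1]$ of the relative deviation of those factors, which tends to $0$ as $u\to 1$. You instead pull back the fixed-limit representation from the proof of Lemma~\ref{lema.decreasing} (the substitution $x=u+(1-u)y^2$, i.e.\ Eq.~\eqref{con3}) and apply dominated convergence; your dominating function is justified by the clean bound $u+(1-u)y^2\geq u$, which you correctly identify as the only point requiring care, and $\int_0^1 dy/\sqrt{1-y^2}=\pi/2$ plays the role of the paper's beta-type integral. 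The two computations are related by the same change of variables, but your convergence argument is arguably tidier (a one-line domination versus an explicit sup-norm error estimate), at the cost of invoking the dominated convergence theorem rather than elementary inequalities. For the $u\to 0$ limit you are actually more complete than the paper, which merely asserts that divergence is ``straightforward''; your explicit lower bound via $\int_0^\delta dy/\sqrt{u+(1-u)y^2}=\tfrac{1}{\sqrt{1-u}}\operatorname{arcsinh}\bigl(\delta\sqrt{(1-u)/u}\bigr)\to\infty$ is a genuine, quantitative justification (the minor imprecision of writing the bounded factors in terms of $y^2$ rather than $x=u+(1-u)y^2$ is harmless on $[0,\delta]$ for small $u$).
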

\begin{proof}
Given the expression Eq.~\eqref{conn1} of the function $F_{a,b}(-bu,u)$, it is straightforward to deduce that this function diverges when $u\to 0$. The second statement of the Lemma can be shown as follows. When $u$ is close to one, the function $F_{a,b}(-bu,u)$ can be approximated as follows
\begin{equation*}
F_{a,b}(-bu,u)=\frac{1}{\sqrt{ab}}\int_u^1 \frac{1-bx}{\sqrt{x(1-x)(1+\frac{b}{a}x)(x-u)}}dx\sim\frac{1-b}{\sqrt{ab(1+\frac{b}{a})}}\int_u^1 \frac{1}{\sqrt{(1-x)(x-u)}}dx.
\end{equation*}
By computing the second integral, we obtain
\begin{equation}\label{eq.aprox}
\frac{1-b}{\sqrt{ab(1+\frac{b}{a})}}\int_u^1 \frac{1}{\sqrt{(1-x)(x-u)}}dx=\frac{1-b}{\sqrt{ab(1+\frac{b}{a})}}\pi .
\end{equation}
Then we estimate the error $\mathcal{E}$ of this approximation
\begin{align*}
|\mathcal{E}|=&\left|\frac{1}{\sqrt{ab}}\int_u^1 \frac{1-bx}{\sqrt{x(1-x)(1+\frac{b}{a}x)(x-u)}}dx-\frac{1-b}{\sqrt{ab(1+\frac{b}{a})}}\int_u^1 \frac{1}{\sqrt{(1-x)(x-u)}}dx\right|\\
=&\frac{1-b}{\sqrt{ab(1+\frac{b}{a})}}\left|\int_u^1 \frac{1}{\sqrt{(1-x)(x-u)}}\left(\frac{(1-bx)\sqrt{1+\frac{b}{a}}}{(1-b)\sqrt{x(1+\frac{b}{a}x)}}-1\right)dx\right|\\
\leq &\frac{1-b}{\sqrt{ab(1+\frac{b}{a})}}\int_u^1 \frac{dx}{\sqrt{(1-x)(x-u)}}\textit{ max }_{x\in[u,1]}\left|\frac{(1-bx)\sqrt{1+\frac{b}{a}}}{(1-b)\sqrt{x(1+\frac{b}{a}x)}}-1\right|\\
=&\frac{1-b}{\sqrt{ab(1+\frac{b}{a})}}\pi\textit{ max }_{x\in[u,1]}\left|\frac{(1-bx)\sqrt{1+\frac{b}{a}}-(1-b)\sqrt{x(1+\frac{b}{a}x)}}{(1-b)\sqrt{x(1+\frac{b}{a}x)}}\right| ,\\
\end{align*}
and we get that $|\mathcal{E}|\to 0$, when $u\to 1$.

Using this result and Eq.~(\ref{eq.aprox}), we get
\begin{equation*}
F_{a,b}(-bu,u)\to \frac{1-b}{\sqrt{ab(1+\frac{b}{a})}}\pi ,
\end{equation*}
when $u\to 1$.
\end{proof}

\section{General implicit equations in the non-physical case}\label{ap.GIE}
 The same derivation as the one used for the second implicit equation can be done after each intersection point between $\mathcal{C}$ and $\mathcal{S}$, since the period $T_{\phi}(c)$ is a decreasing function. We deduce that the implicit equation describing $\mathcal{C}$ depends on the number of intersection points.

After the first $n$ intersection points, the implicit equation that describes $\mathcal{C}$ is
\begin{equation}\label{eq.odd}
2\pi=\int_{-\pi/2+\varepsilon}^{-\pi/2+\varepsilon^*}-h_{a,b,c}(\psi)d\psi+(2n-1)\int_{-\pi/2+\varepsilon^*}^{\pi/2-\varepsilon^*}h_{a,b,c}(\psi)d\psi+\int_{\pi/2-\varepsilon^*}^{\pi/2-\varepsilon}-h_{a,b,c}(\psi)d\psi ,
\end{equation}
for $n$ odd and
\begin{equation}\label{eq.even}
2\pi=\int_{-\pi/2+\varepsilon}^{\pi/2-\varepsilon^*}h_{a,b,c}(\psi)d\psi+(2n-1)\int_{-\pi/2+\varepsilon^*}^{\pi/2-\varepsilon^*}h_{a,b,c}(\psi)d\psi+\int_{-\pi/2+\varepsilon^*}^{\pi/2-\varepsilon}h_{a,b,c}(\psi)d\psi ,
\end{equation}
for $n$ even. Using $x=\cos^2(\psi)$, we get four different expressions
\begin{eqnarray}
& & 2nF_{a,b}(c,-\frac{c}{b})-F_{a,b}(c,u)=2\pi,~n~\textrm{odd},~\varepsilon<\frac{\pi}{2}\label{eq.odd1}\\
& & 2nF_{a,b}(c,-\frac{c}{b})+F_{a,b}(c,u)=2\pi,~n~\textrm{odd},~\varepsilon\geq\frac{\pi}{2}\label{eq.odd2}\\
& &
2nF_{a,b}(c,-\frac{c}{b})-F_{a,b}(c,u)=2\pi,~n~\textrm{even},~\varepsilon>\frac{\pi}{2} \label{eq.even1}\\
& & 2nF_{a,b}(c,-\frac{c}{b})+F_{a,b}(c,u)=2\pi ,~n~\textrm{even},~\varepsilon\leq\frac{\pi}{2}\label{eq.even2}
\end{eqnarray}

\begin{lema}\label{lema.23}
Assume that the parameters $a$ and $b$ are such that $\mathcal{I}<\frac{2}{3}$. Let $(c_1,\varepsilon(c_1))$ be the coordinates of the first intersection point between $\mathcal{C}$ and $\mathcal{S}$. Then there exists an intersection point $(c_{int},\varepsilon(c_{int}))$ such that $c_{int}\in(-b,c_1)$.
\end{lema}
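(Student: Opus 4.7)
The plan is to locate the second intersection of $\mathcal{C}$ with $\mathcal{S}$ by substituting the defining equation of $\mathcal{S}$ (namely $c = -bu$ with $u = \sin^2\varepsilon$) into the implicit equation governing $\mathcal{C}$ after the first intersection, and then to produce a solution via an intermediate value argument based on Lemmas~\ref{lema.decreasing} and~\ref{lema.F}.

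First I would examine how the substitution $c = -bu$ reduces the second implicit equations. Since $F_{a,b}(c, -c/b) = F_{a,b}(-bu, u)$ when $c = -bu$, Eq.~\eqref{eq.SIE1} merely reproduces the first-intersection equation $F_{a,b}(-bu, u) = 2\pi$, while Eq.~\eqref{eq.SIE2} collapses to
\begin{equation*}
3\,F_{a,b}(-bu, u) = 2\pi, \qquad \textrm{i.e.,} \qquad F_{a,b}(-bu, u) = \tfrac{2\pi}{3}.
\end{equation*}
Hence any further intersection of $\mathcal{C}$ with $\mathcal{S}$ lying on the branch $\varepsilon \geq \pi/2$ is equivalent to a solution $u^* \in (0,1)$ of the latter equation. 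The hypothesis $\mathcal{I} < 2/3 < 1$ together with Lemma~\ref{lema.c} guarantees that $\mathcal{C}$ actually traverses $\varepsilon = \pi/2$ at some $(c^{**}, \pi/2)$ with $c^{**} \in (-b, c_1)$, so the branch governed by \eqref{eq.SIE2} is genuinely entered.

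Next I would apply the intermediate value theorem to $u \mapsto F_{a,b}(-bu, u)$. By Lemma~\ref{lema.decreasing} this map is continuous and strictly decreasing on $(0,1]$; by Lemma~\ref{lema.F} it diverges as $u \to 0^+$ and converges to $\pi\mathcal{I}$ as $u \to 1^-$; and at $u_1 = -c_1/b$ it equals $2\pi$. Since the hypothesis gives $\pi\mathcal{I} < 2\pi/3 < 2\pi$, there exists a unique $u^* \in (u_1, 1)$ with $F_{a,b}(-bu^*, u^*) = 2\pi/3$. Setting $c_{int} = -bu^*$ and $\varepsilon(c_{int}) = \pi - \arcsin\sqrt{u^*}$ then yields $c_{int} \in (-b, c_1)$, as required.

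The main obstacle I anticipate is the bookkeeping needed to verify that this candidate point is actually reached by continuous prolongation of $\mathcal{C}$ past $(c^{**}, \pi/2)$, i.e., that Eq.~\eqref{eq.SIE2} remains the governing equation throughout the segment and that no intervening sign change of $\cos\theta$ spoils the argument. This should follow from the monotonic decrease of the period $T_\phi(c) = 2 F_{a,b}(c, -c/b)$ (Lemma~\ref{lema.decreasing}) combined with the fact that only intersections with $\mathcal{S}$ can change the sign of $\cos\theta$: between two consecutive intersections the implicit equation is the same, so $\mathcal{C}$ is indeed described by \eqref{eq.SIE2} along the entire arc joining $(c^{**},\pi/2)$ to $(c_{int}, \varepsilon(c_{int}))$.
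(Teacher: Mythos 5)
Your proposal is correct and follows essentially the same route as the paper's proof: both reduce the search for a further intersection to solving $F_{a,b}(-bu,u)=\tfrac{2\pi}{3}$ along $\mathcal{S}$ via Eq.~\eqref{eq.SIE2}, invoke Lemma~\ref{lema.c} for the crossing of $\varepsilon=\tfrac{\pi}{2}$, and conclude by the intermediate value theorem using Lemmas~\ref{lema.decreasing} and~\ref{lema.F}. The only cosmetic difference is that the paper anchors its intermediate value argument at $u^*=-c^*/b$ where $F_{a,b}(-bu^*,u^*)=\pi$, whereas you anchor it at $u_1=-c_1/b$ where the value is $2\pi$; by strict monotonicity both locate the same point.
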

\begin{proof}
First notice that the condition $\mathcal{I}<\frac{2}{3}<2$ implies the existence of an intersection point on the interval $\varepsilon\in(0,\frac{\pi}{2})$, due to Lemma~\ref{lema.inter}, which gives $-b<c_1$. Then, we stress that  $\mathcal{I}<\frac{2}{3}<1$ leads to the existence of a point $c^*\in(-b,c_1)$, such that $\varepsilon(c^*)=\frac{\pi}{2}$.

If there exists another intersection point between $c^*$ and $c_1$, the proof is done. If not, then we look for one after $c^*$, i.e., on the interval $(-b,c^*)$. For this reason we use the expression~\eqref{eq.odd2}, for $n=1$
\begin{equation*}
2F_{a,b}(c,-\frac{c}{b})+F_{a,b}(c,u)=2\pi .
\end{equation*}
We analyze this equation along the curve $c=-bu$ to find intersection points, getting
\begin{equation*}
2\pi=2F_{a,b}(-bu,u)+F_{a,b}(-bu,u)=3F_{a,b}(-bu,u).
\end{equation*}
Thus, our problem is equivalent to the problem of finding solutions to the implicit equation given by
\begin{equation*}
F_{a,b}(-bu,u)=\frac{2\pi}{3}.
\end{equation*}
Using the same techniques as before, we notice that $F_{a,b}(-bu,u)$ is equal to $\pi$, for $u^*=\frac{c^*}{-b}$ and goes to $\mathcal{I}\pi<\frac{2}{3}\pi$,  when $u$ goes to 1. We deduce that there exists $u_{int}\in(u^*,1)$ such that $F_{a,b}(-bu,u)$ is equal to $\frac{2\pi}{3}$. Then the point $(c_{int},u_{int})$, where $c_{int}=-bu_{int}$, is an intersection point such that $-b<c_{int}<c^*<c_1$.
\end{proof}
\begin{proof}[Proof of Th.~\ref{teo.general}] We proceed by induction
\begin{itemize}
\item Base case.
For $n=1$, we get the condition $\mathcal{I}<\frac{2}{3}<1<2$. Using Lemma~\ref{lema.inter}, we have the existence of at least one intersection point. Using Lemma~\ref{lema.c} and Lemma~\ref{lema.23}, we have respectively the existence of $c^*$ and of the intersection point $(c_{int},\varepsilon(c_{int}))$.

\item Induction step. Assuming that the statement holds for $n>1$, we prove that it holds for $n+1$.
The condition $\mathcal{I}<\frac{2}{2(n+1)+1}<\frac{2}{2n+1}$ implies the existence of at least $n+1$ intersection points (the statement holds for $n$). Let $(c_{n+1},\varepsilon(c_{n+1}))$ be the coordinates of the $(n+1)$-st intersection point and $u_n+1=-\frac{c_{n+1}}{b}$. Notice that $-b<c_{n+1}$ and, hence, $u_{n+1}<1$ (the statement holds for $n$).

First we prove that the condition $\mathcal{I}<\frac{2}{2(n+1)+1}<\frac{1}{n+1}$ implies the existence of a solution $u^*\in(u_{n+1},1)$ to the implicit equation $F_{a,b}(-bu,u)=\frac{\pi}{n+1}$. Using the same analysis as the one done in Lemma~\ref{lema.inter}, we observe that $F_{a,b}(-bu,u)$ is equal to $\frac{2\pi}{2n+1}$, for $u$ equal to $u_{n+1}$, and goes to $\mathcal{I}\pi$, when $u$ goes to 1. Using the hypothesis, we get
\begin{equation*}
\mathcal{I}\pi<\frac{\pi}{n+1}<\frac{2\pi}{2n+1}.
\end{equation*}
We conclude that there exists $u^*\in(u_{n+1},1)$ such that $F_{a,b}(-bu^*,u^*)=\frac{\pi}{n+1}$. Now, we define $-b<c^*=-bu^*<c_{n+1}$ and we show that the point $(c,\varepsilon)=(c^*,\frac{\pi}{2})$ belongs to $\mathcal{C}$. We study the set $\mathcal{C}$ after the first $n+1$ intersection points and we observe that Eq.~\eqref{eq.odd} and~\eqref{eq.even} transform into
\begin{equation*}
2(n+1)\int\limits_{-\pi/2+\varepsilon^*}^{\pi/2-\varepsilon^*}h_{a,b,c^*}(\psi)d\psi,
\end{equation*}
for the point $(c^*,\frac{\pi}{2})$. Using the change of coordinates $x=\cos^2(\psi)$, we get
\begin{equation*}
2(n+1)\int_{-\pi/2+\varepsilon^*}^{\pi/2-\varepsilon^*}h_{a,b,c^*}(\psi)d\psi=2(n+1)F_{a,b}(c^*,\frac{-c^*}{b})=2(n+1)F_{a,b}(-bu^*,u^*)=2\pi.
\end{equation*}
Thus, $(c^*,\frac{\pi}{2})$ belongs to $\mathcal{C}$, which implies that $\varepsilon(c^*)=\frac{\pi}{2}$.

Finally, we prove the existence of the intersection point $(c_{int},\varepsilon(c_{int}))$. If there exists another intersection point between $c^*$ and $c_{n+1}$, the proof is over. If not, then we look for one after $c^*$, i.e., on the interval $(-b,c^*)$. For this reason we use the implicit equation~\eqref{eq.odd2} or~\eqref{eq.even2} along the curve $c=-bu$. Both equations transform into
\begin{equation*}
2\pi=2(n+1)F_{a,b}(-bu,u)+F_{a,b}(-bu,u)=(2(n+1)+1)F_{a,b}(-bu,u).
\end{equation*}
Thus, our problem is equivalent to the problem of finding solutions to the implicit equation given by
\begin{equation*}
F_{a,b}(-bu,u)=\frac{2\pi}{2(n+1)+1}.
\end{equation*}
Using the same techniques as before, we notice that $F_{a,b}(-bu,u)$ is equal to $\frac{\pi}{n+1}$, for $u^*=\frac{c^*}{-b}$ and goes to $\frac{1-b}{\sqrt{b(a+b)}}\pi$, when $u$ goes to 1. We get
\begin{equation*}
\frac{1-b}{\sqrt{b(a+b)}}\pi<\frac{2\pi}{2(n+1)+1}<\frac{\pi}{n+1}.
\end{equation*}
We conclude that there exists $u_{int}\in(u^*,1)$ such that $F_{a,b}(-bu,u)$ is equal to $\frac{2\pi}{2(n+1)+1}$. Then, the point $(c_{int},u_{int})$, where $c_{int}=-bu_{int}$, is an intersection point such that $-b<c_{int}<c^*<c_{n+1}$.
\end{itemize}
\end{proof}
\bibliographystyle{vancouver}

\end{document}